\title{\Huge$\,$\\[-2.75ex]
{Physical Layer Secret Key Generation\\ in Static Environments}\\[0.50ex]}
\author{\large%
Nasser Aldaghri,\,\,\IEEEmembership{Student Member,~IEEE}, and
Hessam Mahdavifar,\,\,\IEEEmembership{Member,~IEEE}\\

\thanks{%
The material in this paper was presented in part at the IEEE Global Communications Conference in December 2018 \cite{aldaghri2018fast}. This work was supported in part by the National Science Foundation under grants CCF--1763348, and CCF--1909771.} 
\thanks{N.\, Aldaghri and H.\ Mahdavifar are with the Department of Electrical Engineering and Computer Science, University of Michigan, Ann Arbor, MI 48109 (email: aldaghri@umich.edu and hessam@umich.edu).}
}
\newtheorem{theorem}{{Theorem}}
\newtheorem{lemma}[theorem]{{Lemma}}
\newtheorem{definition}{{Definition}}
\DeclareMathAlphabet{\mathbfsl}{OT1}{ppl}{b}{it} 
\newcommand{\be}[1]{\begin{equation}\label{#1}}
	\newcommand{\ee}{\end{equation}} 
\newcommand{\eq}[1]{(\ref{#1})}
\renewcommand{\leq}{\leqslant}
\renewcommand{\geq}{\geqslant}
\newcommand{\Tref}[1]{Theo\-rem\,\ref{#1}}
\newcommand{\Lref}[1]{Lem\-ma\,\ref{#1}}
\newcommand{\Cref}[1]{Co\-ro\-lla\-ry\,\ref{#1}}
\newcommand{\deff}{\mbox{$\stackrel{\rm def}{=}$}}
\begin{document}

\maketitle
\begin{abstract}
Two legitimate parties, referred to as Alice and Bob, wish to generate secret keys from the wireless channel in the presence of an eavesdropper, referred to as Eve, in order to use such keys for encryption and decryption. In general, the secret key rate highly depends on the coherence time of the channel. In particular, a straightforward method of generating secret keys in static environments results in ultra-low rates. In order to resolve this problem, we introduce a low-complexity method called \textit{induced randomness}. In this method, Alice and Bob independently generate local randomness to be used together with the uniqueness of the wireless channel coefficients in order to enable high-rate secret key generation. In this work, two scenarios are considered: first, when Alice and Bob share a direct communication channel, and second, when Alice and Bob do not have a direct link and communicate through an untrusted relay. After exchanging the induced randomness, post-processing is done by Alice and Bob to generate highly-correlated samples that are used for the key generation. Such samples are then converted into bits, disparities between the sequences generated by Alice and Bob are mitigated, and the resulting sequences are then hashed to compensate for the information leakage to the eavesdropper and to allow consistency checking of the generated key bit sequences. We utilize semantic security measures and information-theoretic inequalities to upper bound the probability of successful eavesdropping attack in terms of the mutual information measures that can be numerically computed. Given certain reasonable system parameters this bound is numerically evaluated to be $2^{-31}$ and $2^{-10.57}$ in the first and the second scenario, respectively.
\end{abstract}

\section{Introduction}\label{introduction}
Wireless networks are becoming increasingly distributed in future systems, e.g., the fifth generation of wireless networks (5G) and the Internet of Things (IoT), which, consequently, poses a higher risk of malicious attacks against message confidentiality in these systems. In general, communication devices secure messages using either symmetric-key encryption schemes such as Advanced Encryption Standard (AES) \cite{AES}, or asymmetric-key encryption schemes such as Rivest–Shamir–Adleman (RSA) \cite{RSA}. Asymmetric-key schemes are not preferred for devices with limited resources, e.g., as in IoT networks, due to their complex mathematical operations. Instead, symmetric-key cryptographic schemes are desired in IoT networks due to their low-complexity implementations \cite{granjal2015security}. Such schemes require the secret keys for the encryption and decryption to be distributed beforehand between the legitimate parties. To complement the symmetric-key cryptographic schemes, physical layer security methods can be deployed to exchange secret keys between the nodes in order to be used in the encryption and the decryption algorithms \cite{bloch2011physical}. 

The fundamental works of \cite{ahlswede1993common, maurer1993secret} established an information-theoretic framework to study the use of common randomness for secret key generation. In practice, characteristics of wireless links are shown to provide a great source for the common randomness to be used for secret key generation, which have recently received significant attention \cite{mathur2008radio, zhang2016key}. More specifically, the wireless channel has two main features that are essential for secret key generation, namely, reciprocity and randomness. The wireless channel is reciprocal over each single coherence time interval \cite{wilson2007channel}, and it has inherent randomness due to the variation of the channel coefficients between different coherence time slots \cite{mathur2008radio}. Note that the former requires an underlying synchronization mechanism while the latter assumes a dynamic environment. These features are often assumed to be available to the wireless nodes, i.e., the legitimate parties, which can then be utilized in low-complexity secret key generation protocols at the physical layer. The setup for the key generation protocols is as follows: the legitimate parties Alice and Bob share a common wireless channel, either directly or indirectly through a relay node. They communicate through this channel with the goal of generating a common secret key bit sequence, while keeping a passive eavesdropper Eve oblivious about the generated key. Such protocols often include the following steps \cite{zhang2016key}:
\begin{enumerate}
    \item Randomness sharing: In this step, the legitimate parties observe correlated samples from a common source of randomness, e.g., wireless channel coefficients.
    \item Quantization: This is the process of converting such correlated samples, which are often real-valued, into binary bits.
    \item Reconciliation: In general, there is a mismatch between the binary sequences observed and quantized by Alice and Bob. Reconciliation is the process of mitigating such mismatch between Alice's and Bob's bit sequences using methods such as cosets of binary linear codes.
    \item Privacy amplification: This is the process of compensating for the information leakage to the eavesdropper Eve during the aforementioned steps.
\end{enumerate}

\subsection{Related Work} \label{relatedwork}
This section provides an overview of related work on secret key generation using characteristics of wireless channel under two main scenarios; the first scenario where the legitimate parties Alice and Bob have a direct communication channel as the only means of communication, and the second scenario where their communication is helped by a relay node. 
    
\subsubsection{Secret key generation over direct communication channels} \label{dskgintro}
In this case, different characteristics of the wireless channel can be utilized as the source of common randomness in secret key generation protocols. This includes the channel state information (CSI), the received signal strength (RSS), and the channel phase, just to name a few \cite{ren2011secret}. As mentioned earlier, there are two main underlying assumptions in such protocols. First, the assumption on reciprocity of the wireless channel guarantees the reliability of such protocols. Second, the randomness of the key is guaranteed by the assumptions on temporal decorrelation \cite{molisch2012wireless}. The resulting secret key generation protocols, e.g., \cite{asbg,mathur2008radio,wilson2007channel,cgc,keep,liu2012exploiting, zhang2014secure}, where orthogonal frequency division multiplexing (OFDM) is utilized to increase the key rate in \cite{zhang2014secure}, often require dynamic environments in order to satisfy the second assumption and to enable secret key generation at non-zero rates. It is worth noting that some imperfections of the channel measurements may occur due to mismatched hardware and synchronization errors \cite{zeng2015physical}.

Wireless channels can be naturally assumed to be dynamic assuming a certain level of mobility by users and/or in the surrounding environment. However, such assumptions do not hold in static environments such as indoor IoT networks. Consequently, the aforementioned protocols result in ultra-low/zero secret key rates in such environments. This issue has been studied in the literature and various solutions have been proposed. Solutions include utilizing multiple-input-multiple-output (MIMO) antennas systems \cite{zorgui2016ergodic,jorswieck2013secret,jiao2018secret}, beamforming \cite{madiseh2012applying}, deploying friendly jamming \cite{gollakota2011physical} where the users act as jammers to confuse the eavesdropper, and using artificial noise to confuse the eavesdropper \cite{goel2008guaranteeing}. In another line of work, some user-introduced randomness is utilized for various purposes \cite{wang2011fast,huang2013fast,zeng2015physical,li2017security,fang2017manipulatable}. For instance, the user's randomness is used to counter certain types of attacks by the eavesdropper in \cite{zeng2015physical}. However, the use of induced randomness for key generation, and more specifically to increase the key rate in static scenarios, is not discussed in \cite{zeng2015physical}. In general, prior schemes that use some user-introduced randomness require complex underlying architectures, e.g., MIMO transceivers, or unconstrained sources of randomness, i.e., continuous sources whose Shannon entropy is infinity, which are expensive to implement \cite{sunar2007provably}. This, in turn, makes them unappealing for applications where nodes experience a static environment and have limited resources, e.g., IoT networks, sensor networks, etc. Also, solutions based on utilizing coupled dynamics existing in synchronization mechanisms \cite{coupling1,coupling2} and based on full-duplex communications \cite{IMS} are proposed for low-complexity IoT networks, which are often limited to very-short-range communications due to power constraints for implementing coupled dynamics in practice. Moreover, several prior works have considered utilizing relays for secret key generation, as discussed next.

\subsubsection{Secret key generation with the help of a relay} \label{rskgintro}
In this case, there exists a relay node that assists Alice and Bob to generate the shared secret keys. The wireless characteristics used for the randomness sharing in the first scenario, e.g., CSI and RSSI, can be similarly applicable here. Various methods have been proposed in the literature to utilize relays in order to improve the key generation rate when Alice and Bob have a direct communication link as well \cite{lai2012cooperative,wang2012cooperative}. The use of relays in generating secret keys when Alice and Bob do not have a direct communication link is studied in \cite{dong2010improving,shimizu2011physical,zhou2014secret}. A major arguable assumption in these related works is that the relay nodes are trusted. However, the wireless nodes, especially when they are considered low-complex and low-cost as in IoT networks, are susceptible to hacking, even after the key generation process is done. Hence, it is highly desirable to ensure that limited information about the generated secret key is leaked to the relay throughout the process. This is the motivation behind several other related works which assumed the relay nodes are untrusted. For instance, a method to accommodate this case by utilizing friendly jamming is introduced in \cite{zhang2012physical}. Another method that requires a moving relay to generate secret bits is proposed in \cite{guillaume2015secret}. Also, a novel method to resolve the issue of untrusted relays using a MIMO architecture is suggested in \cite{thai2016untrusted}. Such methods, however, require dynamic environments. As mentioned before, these protocols are not appealing for applications where nodes are resource-constrained and the environment is static.
    
\subsection{Our Contributions}\label{contintro}
Our main contribution in this work is a solution, based on low-complexity methods, for resolving the issue of low/zero rate secret key generation between two legitimate nodes in static environments. In the proposed solution, we utilize induced randomness generated by the legitimate parties and exchanged between them. More specifically, Alice and Bob independently generate a certain number of random bits. Then, they map these bits to quadrature amplitude modulation (QAM) symbols which they exchange using the direct communication channel (the first considered scenario) or through an untrusted relay (the second considered scenario). After the exchange of the generated randomness, Alice and Bob process their received sequences, which are the generated randomness by the other party and passed through the channel, using their own random sequences. The reciprocity of the channel/channels ensures that they obtain highly correlated sequences. Such common noisy randomness is then used to extract shared secret keys by following quantization, reconciliation, and privacy amplification steps. The reliability and the security of the proposed protocols are analyzed by upper bounding the probability of falsely accepting a mismatched secret key and the probability of a successful eavesdropping attack by Eve, respectively. While most of prior works on designing physical layer secret key generation protocols rely on spatial decorrelation assumptions to guarantees the security of the key, we provide, to the best of our knowledge, the first rigorous result on upper bounding the probability of successful eavesdropping attack in such protocols. It is worth noting that although the motivation behind the design of the proposed protocols is to resolve the issue of environment immobility, they work in dynamic environments as well assuming that the wireless channel does not change during each session of randomness exchanges.

The proposed protocols are considered under two major scenarios. In the first scenario, secret key generation over a direct communication channel is considered, which was presented in part in \cite{aldaghri2018fast}. In the second scenario, secret key generation with the help of an untrusted relay is considered assuming that there is no direct communication link between Alice and Bob. In the proposed protocols, a communication scheme based on OFDM is assumed to increase the secret key rate as in \cite{zhang2014secure}. Furthermore, we utilize secure sketch \cite{juels1999fuzzy} and universal hash functions (UHF) \cite{hash} to ensure reliability and security of the generated keys while enhancing the randomness of the key bit sequences. Numerical results are provided for the proposed protocol assuming reasonable parameters in the communication setup. These parameters include the modulation order, the number of OFDM subcarriers, the signal-to-noise ratio (SNR), and the quantization resolution. Then, various fundamental metrics are characterized including the bit generation rate (BGR), the bit mismatch rate (BMR), the bit error rate (BER), and the randomness of the key generated using the National Institute of Science and Technology (NIST) randomness tests \cite{rukhin2001statistical}. In addition, a setup in which realistic channel coefficients for 5G millimeter wave (mmWave) channels are generated by the NYUSIM Channel Simulator \cite{NYUSIM} is considered, assuming the first scenario, in order to evaluate the protocol in a realistic environment. Furthermore, we introduce a new efficiency measure for protocols that utilize induced randomness. This parameter, called \textit{randomness efficiency}, measures what percentage of the induced randomness is utilized in the generated common random sequence. The randomness efficiency in the first scenario is 50\,\%, while it is 33\,\% in the second scenario. 
    
The rest of this paper is organized as follows. The system models for the two considered scenarios are discussed in Section \ref{system_model}. In Section \ref{pp}, the proposed protocols for generating secret keys are discussed. In Section \ref{attacker}, the security of the proposed protocols is analyzed. Numerical results are provided in Section \ref{numerical_results}. Finally, the paper is concluded in Section \ref{conclusion}. 
    
\section{System Model}\label{system_model}
Secret key generation protocols consist of two legitimate parties Alice and Bob who aim to generate a common, random, and secure bit sequence using an authenticated shared wireless channel between them. In addition to Alice and Bob, there is an authenticated relay node named Carol, who is honest but curious, and is able to help Alice and Bob generate such keys by relaying their signals when no direct channel exists between them. As the legitimate parties execute the secret key generation protocol, a passive eavesdropper Eve is observing all communications between Alice, Bob, and Carol, and tries to learn as much information as possible about the secret key being generated and shared between Alice and Bob.

\subsection{Direct Secret Key Generation}\label{sys_direct}
\begin{figure}
    \begin{center}
    \includegraphics[trim=112.26pt 353.61pt 154.54pt 258.39pt, clip,width=0.95\columnwidth]{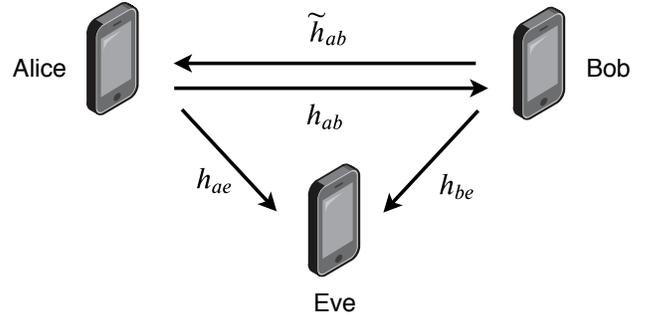}
    \caption{System model for direct secret key generation.}
    \label{fig:DSKG_CH_Model}
    \end{center}
\end{figure}
The channel between Alice and Bob is assumed to be an authenticated wireless channel, but it is not secure. The eavesdropper Eve is assumed to be a passive eavesdropper. The setup of the considered secret key generation (SKG) system is shown in Figure\,\ref{fig:DSKG_CH_Model}. The wireless channel considered in this work is assumed to be a fading channel. Suppose that Alice transmits a signal $ x_{\mathrm{Alice}}(t) $ to Bob, he receives
\begin{align}\label{bobreceivedd}
    y_{\mathrm{Bob}}(t)=x_{\mathrm{Alice}}(t) \circledast h_{ab} (t) + n_b(t),
\end{align}
where $t$ denotes the time, $\circledast$ denotes the convolution operator, $h_{ab} (t)$ denotes the circularly-symmetric Gaussian-distributed channel response with mean $ 0 $ and variance $ \sigma_h^2/2$ in each dimension, and $n_b(t)$ denotes the circularly-symmetric Gaussian-distributed additive noise component with mean $ 0 $ and variance $ \sigma_n^2/2$ in each dimension.  In the case of flat fading channels, the convolution converts to multiplication and the channel response is the Rayleigh-distributed fading gain coefficient with parameter $ \sigma $, i.e.,  $|h_{ab}| \sim \mathrm{Rayleigh} (\sigma)$, and the phase is uniformly distributed, i.e., $\phi(h_{ab}) \sim U [-\pi,\pi] $. The same applies when Bob transmits $ x_{\mathrm{Bob}} (t) $ to Alice, she receives
\begin{align}\label{alicereceivedd}
    y_{\mathrm{Alice}}(t)=x_{\mathrm{Bob}}(t) \circledast \widetilde{h}_{ab} (t) + n_a(t).
\end{align}
The distribution of the channel coefficients $h_{ab}$ will be slightly different in Section\,\ref{nuysimsetupnumericalresults} for the numerical evaluation of the protocol assuming realistic 5G mmWave coefficients. More specifically, samples of Rayleigh distribution are replaced with realistic 5G mmWave channel coefficients considered in \cite{NYUSIM}. 

Wireless channels are essentially reciprocal \cite{mathur2008radio}, meaning that the CSI observed at Bob's end from Alice is the same as Alice's end from Bob assuming an underlying synchronization mechanism. The reciprocity property, i.e., $ h_{ab} \approx \widetilde{h}_{ab}$, is the key to most of the secret key generation protocols that utilize characteristics of the physical layer channel. Also, Alice and Bob are assumed to use OFDM. Suppose that Alice and Bob transmit the $j$-th element of the vectors $ \textbf{x}_{\mathrm{Alice}} (t) $ and $ \textbf{x}_{\mathrm{Bob}} (t) $, respectively, over the $j$-th OFDM subcarrier. The received signals are expressed as follows:
\begin{align}\label{alicebobofdmd}
    \textbf{y}_{\mathrm{Alice}}(t)=\textbf{x}_{\mathrm{Bob}}(t) \circ \widetilde{\textbf{h}}_{ab} (t) + \textbf{n}_a(t),\\
    \textbf{y}_{\mathrm{Bob}}(t)=\textbf{x}_{\mathrm{Alice}}(t) \circ \textbf{h}_{ab} (t) + \textbf{n}_b(t),
\end{align}
where $ \circ $ denotes the Hadamard product, i.e., the element-wise product. By using the received signals at Alice and Bob together with the uniqueness of wireless channel coefficients between them they aim at extracting a shared secret key. Note that in addition to the wireless channel, Alice and Bob are assumed to share a noiseless public channel that Eve has access to. This channel can be realized by using appropriate off-the-shelf modulation and channel coding schemes.

\subsection{Secret Key Generation Using a Relay}\label{sys_relay}

\begin{figure}
    \begin{center}
    \includegraphics[trim=91.37pt 136.06pt 231.56pt 149.14pt,clip, angle=-90,origin=c,width=0.95\columnwidth]{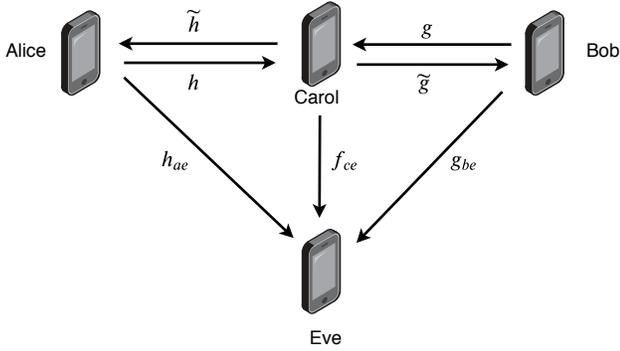}
    \vspace{-0.75in}
    \caption{System model for relay-based secret key generation.}
    \label{fig:RSKG_CH_Model}
    \end{center}
\end{figure}
    
In this case, Alice and Bob do not have access to a direct wireless channel. Instead, there is an intermediate party, also referred to as Carol, operating as a relay node with whom Alice and Bob share authenticated wireless channels which are not secure. The relay is considered to be amplify-and-forward and is assumed to be compliant with the protocol, i.e., it amplifies and forwards the signals without tampering with their contents. However, the relay is considered to be untrusted. This is because it might be susceptible to hacking attacks by an adversary or might be simply curious to learn the contents communicated between Alice and Bob. The eavesdropper Eve is considered to be a passive eavesdropper. The system model is shown in Figure\,\ref{fig:RSKG_CH_Model}. Similar to the model discussed in Section\,\ref{sys_direct}, the channel between each two entities is modeled as a wireless fading channel. Alice, Bob, and Carol utilize OFDM in their transmissions. Alice wishes to transmit a signal $\textbf{x}_{\mathrm{Alice}}(t)$ to Bob through the relay node Carol. First, Alice transmit $\textbf{x}_{\mathrm{Alice}}(t)$ to Carol, who receives
\begin{align}\label{alicerelay}
    \textbf{y}_{\mathrm{Carol}}(t)=\textbf{x}_{\mathrm{Alice}}(t) \circ \textbf{h} (t) + \textbf{n}_r(t).
\end{align}
Next, the relay amplifies the signal using amplification factor $\alpha$ and forwards the amplified signal to Bob. Bob receives
\begin{align}\label{relaybob}
    \textbf{y}_{\mathrm{Bob}}(t)\!&=\!\alpha\!\circ \!\textbf{y}_{\mathrm{Carol}}(t) \! \circ \widetilde{\textbf{g}}(t)+ \textbf{n}_b(t)\\
    &=\!\alpha\!\circ\!(\textbf{x}_{\mathrm{Alice}}(t) \circ \textbf{h} (t) + \textbf{n}_r(t)) \!\circ\!\widetilde{\textbf{g}}(t) + \textbf{n}_b(t),
\end{align}
which holds due to the use of OFDM in transmissions between Alice, Carol, and Bob. The same applies when Bob transmits $\textbf{x}_{\mathrm{Bob}}(t)$ to Alice through Carol. Alice receives
\begin{align}\label{relayalice}
    \textbf{y}_{\mathrm{Alice}}(t)\!=\!\alpha\!\circ\!(\textbf{x}_{\mathrm{Bob}}(t) \circ \textbf{g} (t) + \textbf{n}_r(t)) \!\circ\!\widetilde{\textbf{h}}(t) + \textbf{n}_a(t),
\end{align}
where the $j$-th elements of the vectors $\textbf{h}(t)$, $\widetilde{\textbf{h}}(t)$ are both circularly-symmetric Gaussian-distributed with mean 0 and dimension-variance $\sigma_h^2/2$, and $\textbf{g}(t)$, $\widetilde{\textbf{g}}(t)$ are also circularly-symmetric Gaussian-distributed with mean 0 and dimension-variance $\sigma_g^2/2$. On the other hand, the $j$-th elements of the vectors $\textbf{n}_r(t)$, $\textbf{n}_a(t)$, and $\textbf{n}_b(t)$ are independent and circularly-symmetric Gaussian-distributed with mean 0 and dimension-variance $\sigma_{n_R}^2/2$, $\sigma_{n_A}^2/2$, and $\sigma_{n_B}^2/2$, respectively.

Finally, Alice and Bob use their received signals and utilize the uniqueness of the wireless channel coefficients between them and Carol to extract a secret key. As in the direct secret key generation scenario, a noiseless public channel is available between Alice and Bob through Carol which Eve has access to. Such a channel can be realized by using appropriate off-the-shelf modulation and coding schemes from Alice to Carol, from Carol to Bob, and vice versa.

\noindent
\textbf{Remark:} The relay employs an amplify-and-forward (AF) function with amplification factor $\alpha$. The relay node is placed such that $\alpha$ can be selected according to a certain desired criterion such as maintaining the average transmitted power at the relay, or maintaining the average SNR at the receiver, see, e.g., \cite{hasna2003end} for a detailed discussion. To implement the protocols proposed in this paper, a similar criterion can be adopted since the aim is to create highly correlated sequences which depends on the average received SNR at Alice and Bob. Also, for simplicity it is assumed that the amplification factor is the same for the transmissions to Alice and Bob; however, it can be different for each of them to achieve some specific metric such as the received SNR. The reciprocity property of the indirect channel between Alice and Bob holds in this scenario, since the reciprocity of the individual channels between Alice and Carol, and Carol and Bob still holds.
    
\subsection{Evaluation Metrics for SKG Protocols}
Metrics that are often used to evaluate the performance of secret key generation protocols are as follows \cite{cgc}:
\begin{enumerate}
    \item Bit Generation Rate (BGR): This measures the number of bits per packet in the quantized sequences generated by Alice and Bob, denoted by $\textbf{q}_{a}$ and $\textbf{q}_{b}$, respectively.
    \item Bit Mismatch Rate (BMR): This measures the ratio of the number of bits that are mismatched between $\textbf{q}_{a}$ and $\textbf{q}_{b}$. This quantity can be also measured at Eve's side. Note that the BMR at Eve should be higher than the BMR measured between Alice and Bob; otherwise, no secret key can be generated.  
    \item Bit Error Rate (BER): This measures the ratio of the number of bits that do not match in the final key generated by Alice and Bob as the output of the protocol. This quantity can be also measured at Eve's side, which, ideally, should be close to $50\%$.
    \item Randomness: This indicates whether the final key bit sequence generated by the protocol, denoted by $\textbf{K}_{ab}$, is indistinguishable from a random binary bit sequence. This is often tested using the NIST statistical test suite \cite{rukhin2001statistical}.
\end{enumerate}

In addition to the aforementioned metrics, we introduce a new parameter, referred to as \textit{randomness efficiency}, to measure the length of the shared sequence normalized by the total amount of randomness available to Alice and Bob. Let $R_Q$ denote the total number of shared random bits after quantization. The randomness efficiency, denoted by $E_R$, is defined as
\begin{align}\label{randeff}
    E_R\,\deff\,\frac{R_Q}{H(S)+H(V)},
\end{align}
where $H(S)$ and $H(V)$ are the entropy of Alice's and Bob's sources of randomness, respectively.
    
\section{Proposed Protocols}\label{pp}
The proposed protocols for both scenarios, i.e., secret key generation using a direct channel and relay-based secret key generation, can be partitioned into four stages: induced randomness exchange, quantization, reconciliation, and privacy amplification together with consistency checking. The first stage, i.e., induced randomness exchange, is done differently in the two considered scenarios, while the remaining stages are similar. 

In the first stage, the randomness is induced by Alice and Bob at each of the $N$ OFDM subcarriers, provided that each two-way exchange is done within the same coherence time interval. After the exchange of induced randomness, Alice and Bob process what they receive by performing quantization followed by reconciliation to correct the disparities between their bit sequences. As a result, they obtain, with high probability, identical bit sequences. Then, they use privacy amplification to improve the security of the generated bit sequences. Finally, they check whether their keys are consistent or not. If the keys are not consistent, they re-initiate a new session. The notations for various vectors in the protocol are summarized in Table \ref{table:notation}. Also, Figure \ref{fig:DSKGprotocol} shows an overview of a single session of the key generation protocol for the scenario involving a direct channel, and Figure \ref{fig:RSKGprotocol} shows a single session of the relay-based secret key generation protocol. For ease of notation, we remove the time index $ t $ from the functions while keeping in mind that the exchanges are done within the same coherence time. Next, detailed descriptions of various stages of the proposed protocols are discussed. 

\begin{table}
\begin{center}
\caption{Notation Summary for the $i$-th SKG Session}
\begin{tabular}{c||c}
	\hline
	\textbf{Symbol} & \textbf{Description}\\ 
	\hline
	\textbf{p} & Known probing vector\\
	\hline
	 $\textbf{s}_i$ & Alice's local randomness\\
	\hline
	 $\textbf{v}_i$ & Bob's local randomness\\
	\hline
	 $\textbf{h}_{i,ab}$&Channel coefficients from Alice and Bob \\
	\hline
	 $\widetilde{{\textbf{h}}}_{i,ab}$& Channel coefficients from Bob to Alice \\
	 \hline
	 $\textbf{h}_i$&Channel coefficients between Alice and the relay\\
	\hline
	$\widetilde{\textbf{h}}_i$&Channel coefficients between the relay and Alice\\
	\hline
	$\textbf{g}_i$&Channel coefficients between Bob and the relay\\
	\hline
	$\widetilde{\textbf{g}}_i$&Channel coefficients between the relay and Bob\\
	\hline
	$\textbf{h}_{i,ae}$& Channel coefficients between Alice and Eve\\
	\hline
	$\textbf{h}_{i,be}$& Channel coefficients between Bob and Eve\\
	\hline
	 $\textbf{w}_{i,ab}$ & Alice's samples used for quantization\\
	\hline
	$\widetilde{\textbf{w}}_{i,ab}$ & Bob's samples used for quantization\\
	\hline
	$\textbf{q}_{i,a}$& Alice's quantized version of $\textbf{w}_{i,ab}$\\
	\hline
	$\textbf{q}_{i,b}$& Bob's quantized version of $\widetilde{\textbf{w}}_{i,ab}$\\
	\hline
	$\textbf{K}_{i,ab}$ & Alice's key bits \\
	\hline
	$\widetilde{\textbf{K}}_{i,ab}$& Bob's key bits \\
	\hline
	$\textbf{C}_{i,ab}$ & Alice's check sequence bits \\
	\hline
	$\widetilde{\textbf{C}}_{i,ab}$&Bob's check sequence bits\\
	\hline
\end{tabular}\label{table:notation}
\end{center}
\end{table}

\subsection{Induced Randomness Exchange}
\label{inducedrandomness}
In this stage we aim at creating highly correlated yet random observations at Alice and Bob. We discuss this stage separately for the two considered scenarios as follows:
\subsubsection{Direct Induced Randomness Exchange} \label{p2pskg}
\begin{figure}
\begin{center}
	\includegraphics[trim=214.87pt 335.25pt 127.72pt 77.8pt, clip,width=0.95\columnwidth]{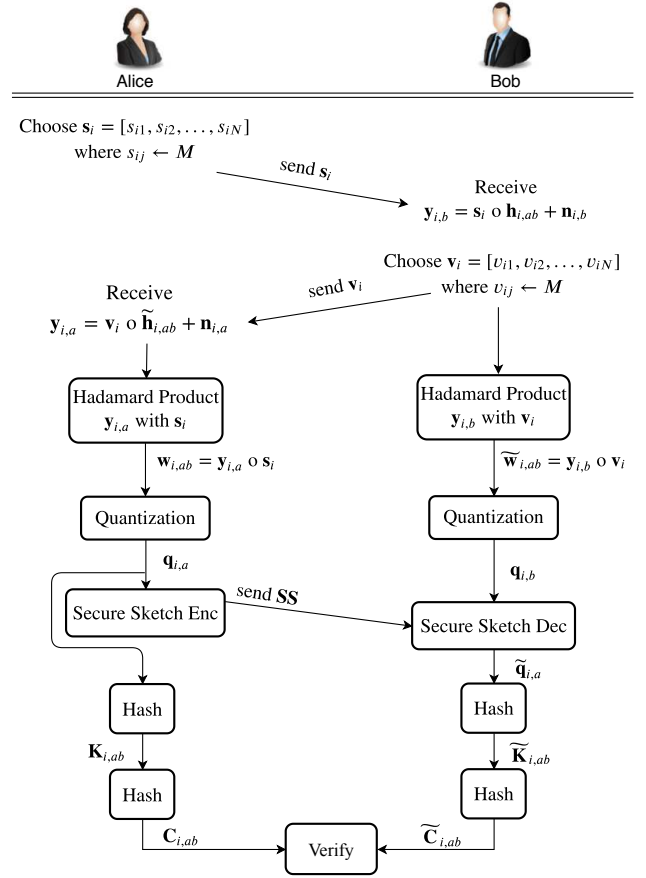}
	\caption{Direct secret key generation protocol overview of a single session.}
	\label{fig:DSKGprotocol}
\end{center}
\end{figure}
In this stage, Alice and Bob exchange randomly generated symbols with each other. In the $i$-th session, Alice chooses a vector $ \textbf{s}_i$ of length $N$ and Bob also chooses a vector $ \textbf{v}_i $ of length $N$. Each element of the vectors $ \textbf{s}_i$ and $ \textbf{v}_i$ is chosen independently and uniformly at random from a set of $ M $ symbols in a $M$-QAM constellation. Then, the symbols are multiplied by a pulse/carrier signal for transmission. The reason behind choosing the symbols from $M$-QAM constellation is that the hardware for transmitting and receiving QAM symbols is readily available in many wireless devices. After the exchange of random symbols, Alice and Bob multiply what they sent with what they received. This results in random sequences $\textbf{w}_{i,ab}$ and $\widetilde{\textbf{w}}_{i,ab}$ available at Alice and Bob, respectively, as follows: 
\begin{align}
    \textbf{w}_{i,ab}=\textbf{s}_i \circ \textbf{v}_i \circ \widetilde{\textbf{h}}_{i,ab}+ \textbf{s}_i \circ \textbf{n}_{i,a}& \label{alicekeydskg},\\
    \widetilde{\textbf{w}}_{i,ab}=\textbf{s}_i \circ \textbf{v}_i \circ \textbf{h}_{i,ab}+ \textbf{v}_i \circ \textbf{n}_{i,b}&.\label{bobkeydskg}
\end{align}
These two vectors are random and highly correlated, as will be shown, which makes them suitable for extracting shared secret keys between Alice and Bob.

\subsubsection{Relay-Based Induced Randomness Exchange}\label{relayskg}
\begin{figure}
\begin{center}
	\includegraphics[trim=26pt 210.36pt 33.31pt 223.1pt,clip, angle=-90,origin=c,width=\columnwidth]{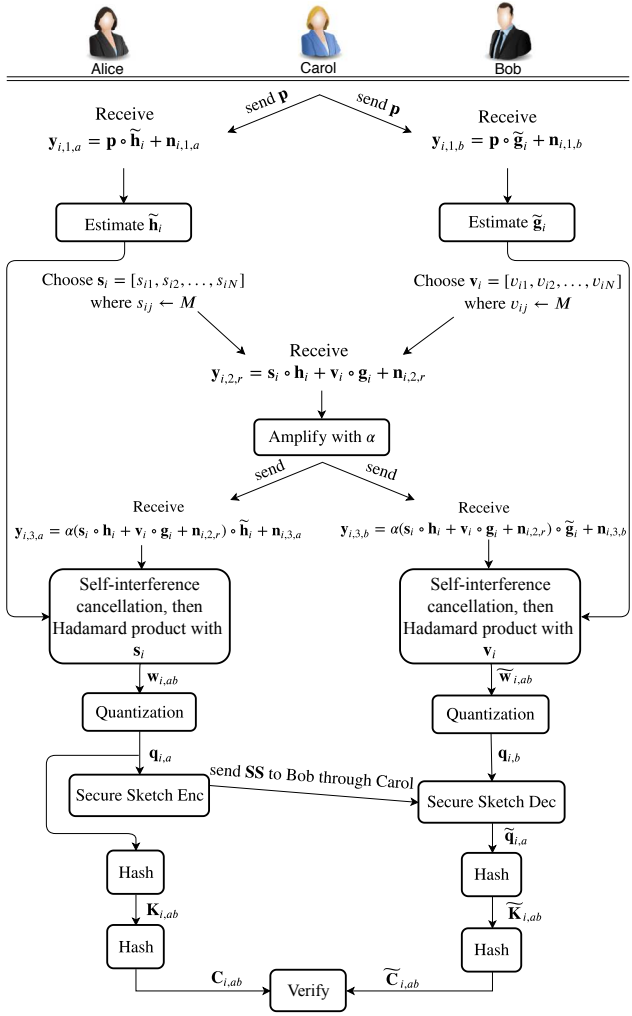}
	\caption{Relay-based secret key generation protocol overview of a single session.}
	\label{fig:RSKGprotocol}
\end{center}
\end{figure}
First, the relay transmits a known probing vector $\textbf{p}$ to Alice and Bob, who receive $\textbf{y}_{i,1,a}$ and $\textbf{y}_{i,1,b}$, respectively, specified as follows:
\begin{align}
    \textbf{y}_{i,1,a}=\textbf{p} \circ \widetilde{\textbf{h}}_i+\textbf{n}_{i,1,a},\\
    \textbf{y}_{i,1,b}=\textbf{p} \circ \widetilde{\textbf{g}}_i+\textbf{n}_{i,1,b}.
\end{align}
Alice and Bob then estimate the channels between themselves and the relay, i.e., $\widetilde{\textbf{h}}_i$ and $\widetilde{\textbf{g}}_i$, respectively, using their observations. Their estimates are denoted by $\widehat{\textbf{h}_i}$ and $\widehat{\textbf{g}}_i$ with estimation errors defined as $\textbf{z}_{i,a}=(\textbf{h}_i \circ \widetilde{\textbf{h}}_i-\widehat{\textbf{h}}_{i}^{\circ 2})$ and $\textbf{z}_{i,b}=(\textbf{g}_i \circ \widetilde{\textbf{g}}_i-\widehat{\textbf{g}}_{i}^{\circ 2})$, respectively, where $(.)^{\circ 2}$ denotes the element-wise square operation. Alice and Bob utilize their respective channel estimates together with their respective local randomness to eliminate the self-interference terms and to generate the correlated samples, to be described next.

Alice and Bob generate, independently and uniformly at random, vectors of length $N$ consisting of $M$-QAM symbols. Let $\textbf{s}_i$ and $\textbf{v}_i$ denote Alice's and Bob's vectors, respectively. They use the probing vector $\textbf{p}$ also for synchronization and, simultaneously, transmit their vectors to the relay in such a way that the received SNRs at the relay with respect to the received sequences from Alice and Bob are the same, and equal to a predetermined value. The relay receives
\begin{align}
	\textbf{y}_{i,2,r}=\textbf{s}_i \circ \textbf{h}_i + \textbf{v}_i \circ \textbf{g}_i+\textbf{n}_{i,2,r}.
	\end{align}
Then, it amplifies $\textbf{y}_{i,2,r}$ with an amplification factor $\alpha$, which is chosen to meet a specific SNR at Alice and Bob, and forwards the amplified signal to Alice and Bob who receive $\textbf{y}_{i,3,a}$ and $\textbf{y}_{i,3,b}$, respectively, as follows:
\begin{align}
	\label{y3a}
	\textbf{y}_{i,3,a}&=\alpha (\textbf{s}_i \circ \textbf{h}_i + \textbf{v}_i \circ \textbf{g}_i+\textbf{n}_{i,2,r}) \circ \widetilde{\textbf{h}}_i +\textbf{n}_{i,3,a},\\
	\label{y3b}
	\textbf{y}_{i,3,b}&=\alpha (\textbf{s}_i \circ \textbf{h}_i + \textbf{v}_i \circ \textbf{g}_i+\textbf{n}_{i,2,r}) \circ \widetilde{\textbf{g}}_i +\textbf{n}_{i,3,b}.
\end{align}
The value of the amplification factor $\alpha$ is assumed to be publicly known. Alice and Bob utilize what they receive from the relay together with their locally generated vectors, their channel estimates, and $\alpha$ in order to construct highly correlated samples. More specifically, the self-interference terms $\alpha\textbf{s}_i \circ \textbf{h}_i  \circ \widetilde{\textbf{h}}_i$ and $\alpha\textbf{v}_i \circ \textbf{g}_i \circ \widetilde{\textbf{g}}_i$ are cancelled at Alice and Bob, respectively, using their local randomness and the channel estimates. The results are normalized by $\alpha$ and then multiplied by the local randomness, which results in $\textbf{w}_{i,ab}$ and $\widetilde{\textbf{w}}_{i,ab}$ at Alice and Bob, respectively, as follows:
\begin{align}
	\textbf{w}_{i,ab}&=\textbf{s}_i \circ  \textbf{v}_i \circ \textbf{g}_i  \circ \widetilde{\textbf{h}}_i +\widehat{\textbf{n}}_{i,3,a} \label{wia}, \\ 
	\widetilde{\textbf{w}}_{i,ab}&=\textbf{s}_i \circ  \textbf{v}_i \circ \widetilde{\textbf{g}}_i  \circ \textbf{h}_i + \widehat{\textbf{n}}_{i,3,b} \label{wib},
\end{align}
where
\begin{align}
	\widehat{\textbf{n}}_{i,3,a}= \textbf{s}^{\circ 2}_i \circ  \textbf{z}_{i,a} +\textbf{s}_i \circ \textbf{n}_{i,2,r}  \circ \widetilde{\textbf{h}}_i  +\textbf{s}_i \circ \textbf{n}_{i,3,a}/\alpha,  \label{noiseA}\\
	\widehat{\textbf{n}}_{i,3,b}=\textbf{v}^{\circ 2}_i \circ  \textbf{z}_{i,b} + \textbf{v}_i \circ \textbf{n}_{i,2,r}  \circ \widetilde{\textbf{g}}_i  +\textbf{v}_i \circ \textbf{n}_{i,3,b}/\alpha,  \label{noiseB}
\end{align}
are the noise terms. The two vectors $\textbf{w}_{i,ab}$ and $\widetilde{\textbf{w}}_{i,ab}$ observed by Alice and Bob are highly correlated and random at each session, which makes them suitable for extracting secret keys.

\subsection{Quantization}
\label{quantization}
In this stage, the complex-valued shared sequences $\textbf{w}_{i,ab}$ and $ \widetilde{\textbf{w}}_{i,ab}$ are turned into bit streams. We use a similar quantization method as suggested in \cite{asbg}. A brief description of the quantization scheme is included next. After collecting the complex-valued measurements $\textbf{w}_{i,ab}$ and $ \widetilde{\textbf{w}}_{i,ab}$, they are sorted as shown in Figure \ref{fig:sorting}. Then, Alice and Bob find the range of sorted data, which is defined as the difference between the maximum value and the minimum value of the sorted vectors. Then, using the range and the quantization resolution $ \delta $, they identify $ \Delta=2^\delta $ uniform quantization intervals, and assign a Gray-code sequence to each interval. Finally, they map each sample to its quantized bit sequence based on the interval it belongs to. The resulting bit sequences for Alice and Bob are denoted by $\textbf{q}_{i,a}$ and $\textbf{q}_{i,b}$, respectively.
	
\begin{figure}
\begin{center}
	\includegraphics[trim=98.47pt 639.52pt 288.2pt 57.14pt, clip,width=0.6\columnwidth]{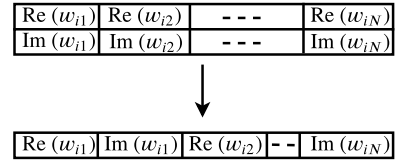}
\end{center}
    \caption{Sorting $ \textbf{w}_{i,ab} $ values before feeding them to the quantizer.}
	\label{fig:sorting}
\end{figure}

\subsection{Reconciliation}

The aim of this stage is to mitigate disagreements between Alice's and Bob's quantized bit sequences. To this end, various methods, such as error-correcting codes, can be used. In our protocols we use error-correcting code-based secure sketch \cite{juels1999fuzzy}, while picking a convolutional code as the underlying code. The reason to pick convolutional codes is due to the simplicity of the encoding process using shift registers and the decoding process using Viterbi decoders \cite{viterbi1967error}. A formal definition of a general secure sketch scheme is as follows:
\begin{definition}\cite{dodis2004fuzzy}\label{securesketch}
    An $(\mathcal{M},m_1,m_2,t)$-secure sketch scheme consists of a sketch function, and a recovery function such that the following properties hold:
    \begin{enumerate}
        \item The sketch function takes an input $w \in \mathcal{M}$ and returns a randomized $SS(w) \in \{0,1\}^*$.
        \item The recovery function takes $SS(w)$ and $\tilde{w} \in \mathcal{M}$, and returns $w$ with probability one as long as the distance between $w$ and $\tilde{w}$ is less than a certain threshold $t$.
        \item For any random variable $W$ over $\mathcal{M}$ with min-entropy $m_1$, an adversary observing $SS(W)$ has an average min-entropy of $W$ conditioned on $SS(W)$ as $\tilde{H}_{\infty}(W |SS(W)) \geq m_2$.
    \end{enumerate}
Note that the min-entropy function of a random variable $X$ is computed as  $H_{\infty}(X)=-\log_2(\underset{x}{\max}( \mathrm{Pr(X=x))})$ and the average min-entropy function of $X$ conditioned on $Y$ is computed as  $\tilde{H}_{\infty}(X|Y) = -\log_2 \big(\underset{y \leftarrow Y}{\mathbb{E}} [2^{-H_{\infty}(X|Y=y)} ] \big) $.
\end{definition}

Next, we describe a construction known as the code-offset secure sketch \cite{juels1999fuzzy}. The encoder is chosen in such a way that the length of its output is equal to the length of $\textbf{q}_{i,a}$. Once the quantized sequences $ \textbf{q}_{i,a} $ and $\textbf{q}_{i,b} $ are available, Alice chooses a bit string $ \textbf{r} $ uniformly at random and encodes it using the convolutional encoder to get $ \mathrm{Enc}(\textbf{r}) $, which is of the same length as $ \textbf{q}_{i,a} $. Then, she computes
\begin{align}
    \textbf{SS}=\textbf{q}_{i,a} \oplus \mathrm{Enc}(\textbf{r}),
\end{align}
where $\oplus$ is the addition modulo 2, and transmits the resulting sequence over the noiseless public channel, either directly as in the first scenario or through the relay as in the second scenario, to Bob. Then, Bob takes the addition modulo 2 of $ \textbf{SS}$ and $\textbf{q}_{i,b}$, feeds it to the Viterbi decoder to get $ \widetilde{\textbf{r}} $, and re-encodes $ \widetilde{\textbf{r}} $ to get $\mathrm{Enc}(\widetilde{\textbf{r}})$. He computes the final sequence as 
\begin{align}
    \widetilde{\textbf{q}}_{i,a}&=\textbf{SS} \oplus \mathrm{Enc}(\mathrm{Dec}(\textbf{SS} \oplus \textbf{q}_{i,b})) \nonumber \\
    &=\textbf{SS} \oplus \mathrm{Enc}( \widetilde{\textbf{r}}).
\end{align}

\noindent\textbf{Remark:} A binary linear code of length $n$ and dimension $k$ with minimum distance $2t+1$ can be used to build an $(\mathcal{F}^n,m_1,m_1-(n-k),t)$-secure sketch scheme, where $\mathcal{F} = \{0,1\}$ for binary codes \cite{dodis2004fuzzy}. The error correction capability of the linear code is related to the underlying rate of the code. This introduces a trade-off between the error correction capability and the security, as higher rates provide better security but can correct less errors, and vice versa. Alice and Bob should start with an initial high rate code and then reduce it accordingly if they observe several consecutive unsuccessful attempts of the protocol.

\subsection{Privacy Amplification Consistency Checking}
Since some information about the shared key is leaked to Eve during the exchange of random symbols and the reconciliation stages, we exploit universal hash functions (UHF) to increase the level of security. In general, UHFs are desired in such scenarios due to their resilience against collisions.
\begin{definition} \cite{hash}\label{uhf}
	A family of hash functions $ H $ that maps a set of inputs $ U $, e.g., binary vectors of length $n$, to a value in the hash table of size $ t $ is called universal if for any two inputs $ x,y  \in U$ with $ x \neq y $, we have
    \begin{align}\label{uhfpr}
	    \underset{h\leftarrow H} {\mathrm{Pr}} (h(x) = h(y) | x \neq y) \leq \frac{1}{t}.
	\end{align}
\end{definition}
    
We also use UHFs to check consistency between keys generated by Alice and Bob, without leaking any information to Eve, as suggested in \cite{keep}. 

Given that $ h $ should be chosen randomly from $ H $, the question is how do we ensure that Alice and Bob agree on the same $h$? We propose a method that guarantees the same choice of $h$ at Alice and Bob if inputs to the UHF are consistent. Suppose we have a random binary sequences $\textbf{q}_{i}$ of length $n$ (This is $\textbf{q}_{i,a}$ for Alice and $\widetilde{\textbf{q}}_{i,a}$ for Bob). For simplicity,  we assume that $ n $ is an even multiple of some integer $m \geq 1$. We divide $\textbf{q}_{i}$ into two sequences of equal length $\textbf{q}_{i}=\textbf{q}_{i,1}\| \textbf{q}_{i,2}$ each of length $n/2$, which is an integer since $n$ is even. Then, $\textbf{q}_{i,1}$ is used to choose $ h $ from $ H $, and $\textbf{q}_{i,2}$ is used as the input to the hash function $h$. Next, a well-known construction of UHF is described that we use in our protocol \cite{hash}. First, the largest prime $ p $ with $2^{m-1}<p<2^m$, i.e., its binary representation consists of $m$ bits, is chosen, where $ m $ is the length of the output bit sequence (such a prime number always exist for $m>1$ by Bertrand's postulate). Then, for $i=1,2$, we divide $\textbf{q}_{i,1}$ and $\textbf{q}_{i,2}$ into $l$ parts $q_{i,1,k}$ and $q_{i,2,k}$ for $k=1,2,\dots,l$, where the length of each part is less than or equal to $m$ bits. For ease of notation, let $q_{i,j,k}$ also denote the number with the binary representation $q_{i,j,k}$. Finally, the following summation is computed:
\begin{align}\label{uhf1}
	h_{\textbf{q}_{i,1}} (\textbf{q}_{i,2}) = \sum_{k=1}^{l} q_{i,1,k} q_{i,2,k} \: \mathrm{mod}  \: p .
\end{align}

Next, the randomness of $\textbf{q}_{i}$ is discussed. In our protocol, $\textbf{s}_i $ and $ \textbf{v}_i $ are chosen uniformly at random for each key generation session. Hence, the value of $\textbf{q}_{i}$ is also random. Therefore, the hash function is randomized during each session, which will be verified in the numerical results section. The output of the aforementioned described hash function is the key bit sequences $\textbf{K}_{i,ab}$ for Alice and $\widetilde{\textbf{K}}_{i,ab}$ for Bob, which are matched with high probability given the reconciliation step. Before Alice and Bob are able to use the key sequences for encryption and decryption, they need to verify the consistency of their keys. To this end, Alice and Bob hash their key sequences $\textbf{K}_{i,ab}$ and $\widetilde{\textbf{K}}_{i,ab}$ again similar to the previously described process. The output of this step is their respective check sequences $\textbf{C}_{i,ab}$ and $\widetilde{\textbf{C}}_{i,ab}$, which they use to verify whether or not their keys are consistent. It is worth noting that, in our protocol, the length of the check sequences, $\textbf{C}_{i,ab}$ and $\widetilde{\textbf{C}}_{i,ab}$, is half the length of the key.

\begin{theorem}\label{pruhff}
	The probability of accepting a mismatched key as consistent by the described protocol with hash table size $p$ for the check sequence is upper bounded as follows:
	\begin{align}\label{beruhf}
	    \mathrm{Pr} (\textbf{C}_{i,ab} = \widetilde{\textbf{C}}_{i,ab} | \textbf{K}_{i,ab} \neq \widetilde{\textbf{K}}_{i,ab}) \leq \frac{1}{p}.
	\end{align}
\end{theorem}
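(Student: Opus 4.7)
The plan is to leverage the universal hash function property from Definition~\ref{uhf} of the inner-product-mod-$p$ family $h_a(x) = \sum_k a_k x_k \bmod p$, which achieves the collision bound $1/p$ thanks to the primality of $p$. Writing Alice's key as $\textbf{K}_{i,ab} = \textbf{K}_{i,ab,1} \| \textbf{K}_{i,ab,2}$ (and analogously for Bob with tildes), the consistency check computes $\textbf{C}_{i,ab} = h_{\textbf{K}_{i,ab,1}}(\textbf{K}_{i,ab,2})$, so a collision $\textbf{C}_{i,ab} = \widetilde{\textbf{C}}_{i,ab}$ is equivalent to
\[
\sum_k \bigl(K_{i,ab,1,k} K_{i,ab,2,k} - \widetilde{K}_{i,ab,1,k} \widetilde{K}_{i,ab,2,k}\bigr) \equiv 0 \pmod{p}.
\]

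I would split into two cases on whether the first halves of the keys agree. If $\textbf{K}_{i,ab,1} = \widetilde{\textbf{K}}_{i,ab,1}$, then the hypothesis $\textbf{K}_{i,ab} \neq \widetilde{\textbf{K}}_{i,ab}$ forces $\textbf{K}_{i,ab,2} \neq \widetilde{\textbf{K}}_{i,ab,2}$, and the collision condition reduces to the nonzero homogeneous linear congruence $\sum_k K_{i,ab,1,k}(K_{i,ab,2,k} - \widetilde{K}_{i,ab,2,k}) \equiv 0 \pmod{p}$ in the uniformly random coefficient vector $\textbf{K}_{i,ab,1}$, so the UHF property of the inner-product family bounds the probability by $1/p$. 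Otherwise I would pick a coordinate $k_0$ with $K_{i,ab,1,k_0} \not\equiv \widetilde{K}_{i,ab,1,k_0} \pmod{p}$ and condition on every key component except $K_{i,ab,2,k_0}$. Using the bilinear identity $K_{1,k}K_{2,k} - \widetilde{K}_{1,k}\widetilde{K}_{2,k} = (K_{1,k} - \widetilde{K}_{1,k})K_{2,k} + \widetilde{K}_{1,k}(K_{2,k} - \widetilde{K}_{2,k})$, the collision event becomes a nonzero linear equation in the residual uniform variable $K_{i,ab,2,k_0}$, again bounded by $1/p$.

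The main obstacle is the non-standard feature that the hash selector and the hash input for each party are derived from the \emph{same} key, rather than being independently sampled as in a textbook UHF argument. The case analysis above sidesteps this by isolating a single coordinate whose residual uniform randomness (inherited from the induced randomness $\textbf{s}_i$ and $\textbf{v}_i$) suffices to turn the collision event into a nonzero linear constraint modulo $p$. A minor technical point to verify in the second case is the existence of an index $k_0$ whose difference does not vanish mod $p$; since Bertrand's postulate guarantees $p > 2^{m-1}$, any nonzero integer difference of $m$-bit numbers lies in $(-2p, 2p) \setminus \{0\}$ and can fail only at $\pm p$, a boundary case that can be ruled out or absorbed into the $1/p$ slack without weakening the bound.
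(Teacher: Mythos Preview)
The paper's own proof is a single sentence: it invokes the universal hash function property of Definition~\ref{uhf} directly, with output table size $p$, and stops there. Your proposal is far more elaborate and, to your credit, surfaces a genuine subtlety the paper glosses over: Alice and Bob derive their hash selectors from the first halves of their \emph{own} keys, so when $\textbf{K}_{i,ab,1} \neq \widetilde{\textbf{K}}_{i,ab,1}$ they are not even evaluating the same member of the family, and the textbook collision bound \eqref{uhfpr} does not literally apply.

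However, your case analysis does not actually close the gap. Both cases hinge on some component of $\textbf{K}_{i,ab}$ being uniformly distributed and independent of everything else you have conditioned on --- the full vector $\textbf{K}_{i,ab,1}$ in Case~1, the single coordinate $K_{i,ab,2,k_0}$ in Case~2. But $\textbf{K}_{i,ab}$ is itself the output of the privacy-amplification hash applied to $\textbf{q}_{i,a}$, not raw induced randomness; nothing in the protocol establishes that its $m$-bit blocks are uniform over $\{0,\dots,2^m-1\}$, let alone over $\mathbb{Z}_p$. Worse, in Case~2 you condition on all of Bob's key, including $\widetilde{K}_{i,ab,2,k_0}$; since Alice's and Bob's reconciled sequences are by construction highly correlated, this conditioning can sharply constrain $K_{i,ab,2,k_0}$ and destroy the residual uniformity you need for the ``nonzero linear equation in a uniform variable'' step. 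So while you have correctly diagnosed why the one-line appeal to \eqref{uhfpr} is formally loose, your repair rests on distributional assumptions that the protocol does not supply.
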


\begin{proof} This follows directly from the definition of universal hash functions, specified in \eqref{uhfpr}, where the output hash table size is $p$.
\end{proof}
	
\section{Attacker Model and the Resilience of Proposed Protocols}\label{attacker}

In this section we discuss eavesdropping strategies by the passive eavesdropper Eve in both scenarios, i.e., whether the communication is through a direct communication channel or through a relay, and provide an upper bound on the probability of a successful eavesdropping attack.

\subsection{Direct Secret Key Generation}\label{attackerDSKG}

In this scenario, Eve's best strategy is to acquire $ \textbf{s}_i, \textbf{v}_i$ and $ \textbf{h}_{i,ab}$. When Alice and Bob exchange signals, Eve receives
\begin{align}
    \textbf{e}_{i,1}&=\textbf{s}_i \circ \textbf{h}_{i,ae} +\textbf{n}_{i,e_1}\label{evereceivedskg1},\\
    \textbf{e}_{i,2}&=\textbf{v}_i \circ \textbf{h}_{i,be} +\textbf{n}_{i,e_2}.\label{evereceivedskg2}
\end{align}
If Eve is able to estimate both $ \textbf{s}_i $ and $ \textbf{v}_i $ from her observations in \eqref{evereceivedskg1} and \eqref{evereceivedskg2} perfectly, she can create samples of the following form:
\begin{align}
    \textbf{w}_{i,e_d,1}&=\textbf{s}_i \circ \textbf{v}_i \circ \textbf{h}_{i,ae} +\textbf{n}_{i,e_3}\label{evekeyseqdskg1},\\
    \textbf{w}_{i,e_d,2}&=\textbf{s}_i \circ \textbf{v}_i \circ \textbf{h}_{i,be} +\textbf{n}_{i,e_4}\label{evekeyseqdskg2}.
\end{align}
Note that she still needs to know $\textbf{h}_{i,ab}$ at all different subcarriers in order to obtain $ \textbf{w}_{i,ab}$ and/or $\widetilde{\textbf{w}}_{i,ab}$, as described in \eqref{alicekeydskg} and \eqref{bobkeydskg}. Luckily, this is, almost, not possible for Eve as discussed next.

In general, the Pearson correlation coefficient $ \rho $ of the channel fading coefficients at locations separated by distance $ d $ can be computed as follows \cite{molisch2012wireless}:
\begin{align}\label{corr}
    \rho= [J_0(kd)]^2,
\end{align}
where  $ J_0(.) $ is the Bessel function of first kind, and $ k $ is the wavenumber. Therefore, if the distance between Alice/Bob and Eve is larger than half of the wavelength $ \lambda/2 $, e.g., 5 cm in 3GHz band, they will experience almost uncorrelated fading channels. Therefore, the leaked information about the generated secret key to Eve is small and is often assumed to be negligible in the literature. However, it is fundamentally important to quantitatively measure the security level. An information-theoretic measure of security is the mutual information between the shared random sequence, from which the secure key will be generated, and what Eve observes. If we assume that the effect of quantization is negligible and also assume that Eve can perfectly recover $ \textbf{s}_i $ and $ \textbf{v}_i $, this mutual information is equal to the mutual information between $\textbf{h}_{i,ab}$ and the pair $(\textbf{h}_{i,ae},\textbf{h}_{i,be})$. One can assume that Eve is closer to Bob than Alice and hence, only consider the mutual information between $\textbf{h}_{i,ab}$ and $\textbf{h}_{i,ae}$ as the dominating term. This can be calculated in each subcarrier as stated in the next lemma. 
\begin{lemma}\label{minfodskg}
    Let $h_{bk}$ and $h_{ek}$ denote the fading coefficients of Bob's and Eve's channels at the $k$-th subcarrier. Also, let $\rho$ denote the correlation coefficient between $h_{bk}$ and $h_{ek}$, specified in \eq{corr}. Then, the mutual information between $h_{bk}$ and $h_{ek}$ is given by 
	\begin{align}
	    I(h_{bk};h_{ek})=-\log(1-\rho^2)  \: \mathrm{bits}.
	\end{align}
\end{lemma}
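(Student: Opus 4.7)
The approach is to exploit the system model of Section~\ref{sys_direct}, under which the fading coefficients $h_{bk}$ and $h_{ek}$ are zero-mean circularly-symmetric complex Gaussian random variables, and the pair is jointly Gaussian with correlation specified by \eqref{corr}. The mutual information between jointly complex Gaussian variables admits a clean closed form in terms of the determinant of their covariance matrix, and this is what drives the result.

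First, I would package the pair into a two-dimensional zero-mean circularly-symmetric complex Gaussian vector $(h_{bk}, h_{ek})^\top$ with covariance matrix
\begin{align*}
K \;=\; \sigma_h^2 \begin{pmatrix} 1 & \rho \\ \rho & 1 \end{pmatrix},
\end{align*}
where $\rho$ is as in \eqref{corr} and is real-valued (the Bessel-function expression is real). The two marginals are each circularly-symmetric complex Gaussian with variance $\sigma_h^2$, consistent with the channel model in \eqref{bobreceivedd}--\eqref{alicereceivedd}. It is important at this step to verify that $K$ is a valid covariance matrix, i.e.\ $|\rho|\le 1$, which holds since $|J_0(\cdot)|\le 1$.

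Next, I would invoke the standard differential-entropy formula for an $n$-dimensional circularly-symmetric complex Gaussian with complex covariance $K$, namely $h(\cdot) = \log_2 \det(\pi e K) = n\log_2(\pi e) + \log_2 \det K$ bits. Applied to each marginal, this yields $h(h_{bk}) = h(h_{ek}) = \log_2(\pi e \sigma_h^2)$; applied to the joint vector, using $\det K = \sigma_h^4(1-\rho^2)$, it yields $h(h_{bk}, h_{ek}) = 2\log_2(\pi e) + \log_2\!\big(\sigma_h^4(1-\rho^2)\big)$. Finally, plugging into $I(h_{bk}; h_{ek}) = h(h_{bk}) + h(h_{ek}) - h(h_{bk}, h_{ek})$, the $\sigma_h^2$ and $\pi e$ contributions cancel, leaving $-\log_2(1-\rho^2)$ bits, as claimed.

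The main subtlety, rather than any heavy calculation, is bookkeeping: one must use the complex (rather than real) Gaussian differential-entropy formula, so that the prefactor is $\pi e$ and the log appears without a factor of $\tfrac{1}{2}$, which is exactly what yields $-\log(1-\rho^2)$ instead of $-\tfrac{1}{2}\log(1-\rho^2)$ in the final expression. One should also note that the variance $\sigma_h^2$ is irrelevant for the mutual information, which is consistent with the fact that $I$ is invariant under invertible scaling, but this should be mentioned to justify why the result depends only on $\rho$.
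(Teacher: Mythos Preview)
Your proof is correct and follows the same underlying idea as the paper's: use joint Gaussianity of $(h_{bk},h_{ek})$, expand $I=h(h_{bk})+h(h_{ek})-h(h_{bk},h_{ek})$ via the Gaussian differential-entropy formula, and observe that the variance terms cancel, leaving $-\log(1-\rho^2)$.

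The difference is purely in parameterization. The paper splits each coefficient into its real and imaginary parts, works with the four real Gaussians $(h_{bk,I},h_{bk,Q},h_{ek,I},h_{ek,Q})$, writes out the $4\times 4$ real covariance matrix explicitly, and applies $H_d(\mathbf{X})=\tfrac{1}{2}\log\det(2\pi e\,\Sigma)$. You instead keep the variables complex, assume the pair is jointly circularly symmetric (proper) complex Gaussian, and apply $h(\mathbf{Z})=\log\det(\pi e\,K)$ directly to the $2\times 2$ complex covariance $K$. Your route is a bit more compact; the paper's route is slightly more explicit in that it makes the assumed correlation structure between the I and Q components visible (real parts correlated with $\rho$, imaginary parts correlated with $\rho$, cross terms zero), which is exactly what guarantees that the joint vector is proper so that the complex entropy formula you invoke is valid. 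If you use your approach, it would be worth stating that propriety (vanishing pseudo-covariance) holds under the model, since the complex formula fails otherwise. The paper also allows $\sigma_b\neq\sigma_e$ in its computation, whereas you set both to $\sigma_h$; this does not affect the result, as you correctly note that the variances cancel.
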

\begin{proof} 
We have $ h_{bk}=h_{bk,I}+j h_{bk,Q}  $, and $ h_{ek}=h_{ek,I}+j h_{ek,Q}  $. $h_{bk,I}, h_{bk,Q} $ are independent and identically distributed as $ \mathcal{N}(0,\sigma_b^2/2) $ and $h_{ek,I}, h_{ek,Q}$ are independent and identically distributed as $\mathcal{N}(0,\sigma_e^2/2)$. The real parts of Bob's and Eve's channel coefficients are correlated with the parameter $\rho$, and the imaginary parts are also correlated with $ \rho $. Then, we have the following covariance matrices:
\begin{align}\label{sigma1}
    \Sigma_1\!=\!\begin{bmatrix}
       \sigma_b^2/2    & 0 \\
    	0       & \sigma_b^2/2\\
    \end{bmatrix},
    \Sigma_2\!=\!\begin{bmatrix}
    \sigma_e^2/2    & 0 \\
    0       & \sigma_e^2/2\\
    \end{bmatrix},
    \end{align}

    \begin{align}\label{sigma3}
    \Sigma_3\!=\!\begin{bmatrix}
    \sigma_b^2/2    & 0 &\frac{ \rho \sigma_b \sigma_e}{2}&0 \\
    0       & \sigma_b^2/2& 0 &\frac{ \rho \sigma_b \sigma_e}{2}\\
    \frac{ \rho \sigma_b \sigma_e}{2} & 0 &\sigma_e^2/2 &0\\
    0 & \frac{ \rho \sigma_b \sigma_e}{2} & 0 &\sigma_e^2/2\\
    \end{bmatrix}.
\end{align}
Then, the following series of equalities holds:
    \begin{align}\label{dermuinfodskg}
    I(h_{bk};h_{ek})& =I(h_{bk,I}+j h_{bk,Q};h_{ek,I}+j h_{ek,Q} )\nonumber \\
    &\stackrel{\text{(a)}}{=}I(h_{bk,I},h_{bk,Q};h_{ek,I},h_{ek,Q})\nonumber \\
    &\stackrel{\text{(b)}}{=}H_d(h_{bk,I},h_{bk,Q})+ H_d(h_{ek,I},h_{ek,Q}) \nonumber \\
    &\hspace{35pt} -H_d(h_{bk,I},h_{bk,Q},h_{ek,I},h_{ek,Q}) \nonumber \\
    & \stackrel{\text{(c)}}{=} \frac{1}{2} \log(\det(2 \pi e \Sigma_1)) +\frac{1}{2} \log(\det(2 \pi e \Sigma_2)) \nonumber \\
    &\hspace{92pt}  - \frac{1}{2} \log(\det(2 \pi e \Sigma_3))\nonumber \\
    & \stackrel{\text{(d)}}{=} \log(\pi e \sigma_b^2) + \log(\pi e \sigma_e^2)\nonumber \\
    &\hspace{56pt} - \log((\pi e \sigma_b \sigma_e)^2 (1-\rho^2))\nonumber \\
    & \stackrel{\text{(e)}}{=} -\log(1-\rho^2),
    \end{align}
    \noindent where:\\
    (a) holds due to having a one-to-one mapping;\\
    (b) is the expansion of the mutual information expression in terms of differential entropy;\\
    (c) holds by using the well-known expression that the differential entropy of multivariate Gaussian random variables $ \textbf{X}^n\:=(X_1\:,X_2\:,...,\:X_n)$ with covariance matrix $ \Sigma_i $ is $ H_d(\textbf{X}^n) = \frac{1}{2} \log(\det(2 \pi e \Sigma_i)) $;\\
    and (d) and (e) are simplification steps.
\end{proof}

Note that as $ \rho $ goes to zero, the mutual information, given by \Lref{minfodskg}, also goes to zero. 

The next question, which also applies to any physical layer security scheme that utilizes information-theoretic measures of security, is how to quantitatively characterize the chances of a successful eavesdropping attack by Eve, i.e., guessing the key, given the leaked information? The latter is often measured in terms of semantic security, which is a classical notion of security in cryptosystems \cite{goldwasser}. Direct connections between metrics for the information-theoretic security, based on the mutual information, and cryptographic measures of security, including semantic security, are provided in \cite{semantic}. We use these connections to arrive at the following theorem which characterizes the security of the proposed protocol from the aforementioned perspective:

\begin{theorem}\label{semanticdskg}
	Let $N$ denote the number of subcarriers used in the proposed protocol and $\delta$ denote the quantization resolution. Then, the probability of a successful eavesdropping attack by Eve is upper bounded as follows:
    \begin{align}
	    \mathrm{Pr(Successful\: attack)} < &\big(2^{-2\delta}+ \sqrt{2 I(h_{b};h_{e})}\:\big)^{N} +2^{-\delta N},
	\end{align} 
	where $h_{b}$ and $h_{e}$ denote the fading coefficients of Bob's and Eve's channels at a subcarrier.
\end{theorem}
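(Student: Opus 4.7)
The plan is to reduce Eve's optimal eavesdropping strategy to an information-theoretic guessing problem on the legitimate channel coefficients, and then translate the mutual-information bound of \Lref{minfodskg} into a probability-of-success bound via the semantic-security framework of \cite{semantic}. First, I would formalize Eve's best strategy: granting her the worst-case ability to perfectly recover $\textbf{s}_i$ and $\textbf{v}_i$ from \eqref{evereceivedskg1}--\eqref{evereceivedskg2}, her residual uncertainty in reconstructing $\textbf{w}_{i,ab}$ from \eqref{alicekeydskg} is the legitimate channel $\textbf{h}_{i,ab}$, and her only useful side information is the correlated vector $\textbf{h}_{i,ae}$ (or symmetrically $\textbf{h}_{i,be}$).

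Next, I would exploit the OFDM structure to decompose the problem across subcarriers. Under the standard assumption that the fading coefficients at distinct subcarriers are independent, the per-subcarrier quantized key segments form independent components, so the global successful-guess event factors over subcarriers. At each subcarrier, I would invoke the connection between mutual-information-based security metrics and semantic security established in \cite{semantic}, which yields an inequality of the form
\begin{equation*}
    \Pr\bigl[A(Y) = X\bigr] \leq \max_{x} \Pr[X=x] + \sqrt{2\, I(X;Y)}
\end{equation*}
for any adversary $A$, random variable $X$, and observation $Y$. Setting $X$ to be the quantized bits at a given subcarrier (essentially uniform over $2^{2\delta}$ values under the assumption, stated just before \Lref{minfodskg}, that the quantization effect is negligible) and $Y$ to be Eve's observation at that subcarrier, the data processing inequality gives $I(X;Y) \leq I(h_{b};h_{e})$, so the per-subcarrier success probability is at most $2^{-2\delta} + \sqrt{2\, I(h_{b};h_{e})}$.

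Raising this per-subcarrier bound to the $N$-th power, owing to independence across subcarriers, produces the first term $(2^{-2\delta} + \sqrt{2\, I(h_{b};h_{e})})^{N}$. The additional $2^{-\delta N}$ term is then added to cover the degenerate fallback in which Eve simply guesses the final $\delta N$-bit key uniformly at random, which succeeds with probability $2^{-\delta N}$ independently of any channel leakage. Summing these two contributions yields the stated bound.

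The main technical obstacle will be the per-subcarrier reduction: one has to carefully specify what Eve's observation is at a single subcarrier (namely, the relevant entry of $\textbf{h}_{i,ae}$ after she has performed the idealized recovery of $\textbf{s}_i$ and $\textbf{v}_i$) and justify that $X$ is sufficiently close to uniform over its quantization alphabet so that $\max_{x} \Pr[X=x]$ equals $2^{-2\delta}$. Closely related is the use of the data processing inequality to reduce the discrete mutual information $I(X;Y)$ to the continuous quantity $I(h_{b};h_{e})$ from \Lref{minfodskg}, which is also where the ``negligible quantization effect'' approximation plays its role.
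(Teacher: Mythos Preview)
Your approach is essentially the same as the paper's: grant Eve perfect recovery of $\textbf{s}_i,\textbf{v}_i$, apply the semantic-security bound from \cite{semantic} per subcarrier to get $2^{-2\delta}+\sqrt{2I(h_b;h_e)}$, exploit independence across the $N$ subcarriers to take the product, and then add the $2^{-\delta N}$ term via a union bound. Your explicit use of the data processing inequality to pass from the discrete $I(X;Y)$ to $I(h_b;h_e)$ is a nice clarification that the paper leaves implicit.

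There is, however, one point where your justification departs from the paper and is not quite right as stated. You describe the $2^{-\delta N}$ term as covering the ``degenerate fallback in which Eve simply guesses the final $\delta N$-bit key uniformly at random.'' That is a \emph{lower} bound on Eve's success probability, not an upper bound, so it cannot be what closes the argument. The paper's decomposition is a union bound over two events: either Eve correctly recovers the full quantized sequence $\textbf{q}_{i,a}$ (first term), or she does not. Conditioned on the latter, Eve's input to the UHF differs from Alice's, and the universal-hash-function collision property (Definition~\ref{uhf}) then guarantees that her key guess matches $\textbf{K}_{i,ab}$ with probability at most $1/p \approx 2^{-\delta N}$, since the key length is $\delta N$. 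That privacy-amplification argument is what supplies the upper bound for the second term; you should replace the uniform-guess remark with this.
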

	
\begin{proof} 
\cite[Theorem 5]{semantic} relates the mutual information between Bob's and Eve's observations to the increase in the probability of a successful eavesdropping attack by Eve given her observations. More specifically, the increase in the latter probability is quantified in terms of the mutual information between Bob's and Eve's observations \cite[Theorem 5]{semantic}. Note that the probability that Eve successfully guesses the bits, with no observations, at a single subcarrier is $2^{-2\delta}$. In addition to that, by \cite[Theorem 5]{semantic}, the probability that Eve can guess the shared random bits in a single subcarrier, given her observations in this subcarrier, is increased by at most $\sqrt{2 I(h_{b};h_{e})}$ compared to the case where she does not have any observation. Therefore, Eve's probability of successfully guessing these quantized key bits is upper bounded by $2^{-2\delta}+\sqrt{2 I(h_{b};h_{e})}$. The probability that Eve can recover the shared randomness over all subcarriers is then given by $\big(2^{-2\delta}+\sqrt{2I(h_{b};h_{e})}\big)^N$. Note that $I(h_{b};h_{e})$ is the same across all the subcarriers and is actually computed in terms of $\rho$ in \Lref{minfodskg}. If Eve cannot recover all the shared randomness, the probability that she can guess the secret key correctly, by the property of hash functions in the privacy amplification part of our protocol, is at most $2^{-\delta N}$, when using a key sequence of half the quantized bit sequence length. Utilizing these together with the union bound completes the proof. 
\end{proof}

Note that \Tref{semanticdskg} together with \Lref{minfodskg} can be used to provide a numerical upper bound on the probability of a successful eavesdropping attack given a lower bound on the distance between Eve and both Alice and Bob. For instance, suppose that the distance between Eve and Bob is at least half of a wavelength, i.e., $ d=\lambda/2 $ and is less than the distance between Eve and Alice. Then, the correlation coefficient $\rho$ is at most $ 0.09 $ and by \Lref{minfodskg} the resulting mutual information $I(h_{b};h_{e})$ is at most $0.01$ bits at any of the subcarriers. Suppose that $N=16$ and $\delta=2$, which are also used in the numerical results provided in the next section. Then, by \Tref{semanticdskg}, the probability of a successful attack by Eve given such parameters is at most $2^{-37} + 2^{-32} < 2^{-31}$. 

\subsection{Relay-based Secret Key Generation}
\label{attackerRSKG}
In this scenario, Eve tries to use her observations and the messages transmitted over the public channel to guess $\textbf{w}_{i,ab}$ and/or $\widetilde{\textbf{w}}_{i,ab}$, as described in \eqref{wia} and \eqref{wib}. Her best strategy is to find $\textbf{s}_i$, $\textbf{v}_i$, $\textbf{g}_i$ and $\textbf{h}_i$. When Alice and Bob transmit their induced randomness, Eve receives
\begin{align}\label{evereceiverskg1}
    \textbf{w}_{i,e,1}=\textbf{s}_i \circ \textbf{h}_{i,ae} + \textbf{v}_i \circ \textbf{g}_{i,be} +\textbf{n}_{i,e_5}.
\end{align}
However, when Carol, the relay, amplifies and forwards the signal from Alice and Bob, Eve receives 
\begin{align}\label{evereceiverskg2}
    \textbf{e}_{i,3}=\alpha(\textbf{s}_i \circ \textbf{h}_{i} + \textbf{v}_i \circ \textbf{g}_{i} +\textbf{n}_{i,2,r}) \circ \textbf{f}_{i,ce} + \textbf{n}_{i,e_6}.
\end{align}
Since Eve can estimate the channel coefficients $\textbf{f}_{i,ce}$ from the relay's transmission when it transmits the known probing vector and, also, she knows the value of $\alpha$ from the messages over the public channel, she can successfully estimate 
\begin{align}\label{wie}
        \textbf{w}_{i,e,2}=\textbf{s}_i \circ \textbf{h}_{i} + \textbf{v}_i \circ \textbf{g}_{i}+\textbf{n}_{i,2,r}.
\end{align}
In a worst-case scenario from the legitimate parties' perspective, Eve has as much information as the relay has, in addition to her own observations. Note that this coincides with the problem of securing the shared key against the untrusted relay Carol when Eve is at Carol's location. In the remaining of this section, we analyze the probability of a successful eavesdropping attack assuming that the eavesdropper Eve has all the information available to Carol, in addition to her own observations.

Note that the computations involving the spatial correlation parameter of the wireless channels do not help in ensuring the security in this scenario as they do in the first scenario with a direct communication channel. Also, the mutual information between $\textbf{w}_{i,ab}$, as described in \eqref{wia}, and the pair $(\textbf{w}_{i,e,1},\textbf{w}_{i,e,2})$, as described in \eqref{evereceiverskg1} and \eqref{wie}, respectively, is expected not to be very small as it was in the first scenario. For instance, if this mutual information is greater than $0.5$, then using \cite[Theorem 5]{semantic}, same as in the proof of \Tref{semanticdskg}, does not yield a non-trivial upper bound on the probability of a successful eavesdropping attack. Hence, instead of utilizing semantic security, we need to use an alternative approach to relate $I(\textbf{w}_{i,ab};\textbf{w}_{i,e,1},\textbf{w}_{i,e,2})$ to the probability of a successful eavesdropping attack. To this end, we use Fano's inequality to bound the probability of successful estimation of the quantized bits $\textbf{q}_{i,a}$ by the eavesdropper in terms of the conditional entropy of the quantized bits $\textbf{q}_{i,a}$ given the eavesdropper's observations $(\textbf{w}_{i,e,1},\textbf{w}_{i,e,2})$. Note that the latter can be bounded in terms of $I(\textbf{w}_{i,ab};\textbf{w}_{i,e,1},\textbf{w}_{i,e,2})$. The details of this analysis are given next in the proof of \Tref{fanorskg}. 

To simplify the expressions in the next theorem, let us consider an arbitrary subcarrier and denote the corresponding entries of the vectors $\textbf{w}_{i,ab}$, $\textbf{w}_{i,e,1}$, $\textbf{w}_{i,e,2}$, and $\textbf{q}_{i,a}$ as $w_{ab}$, $w_{e,1}$, $w_{e,2}$, and $q_{a}$, respectively. Note that the result of \Tref{fanorskg} does not depend on the choice of the subcarrier. 
    
\begin{theorem}\label{fanorskg}
    Let $N$ denote the number of subcarriers used in the proposed protocol and $\delta$ denote the quantization resolution. Then, the probability of a successful eavesdropping attack by Eve is upper bounded as follows:
    \begin{align}
	    \mathrm{Pr(Successful\: attack)} <&\Big(1-\frac{H(q_{a})- I_{ab,e}-1}{\log_2(|\mathcal{Q}_{A}|)}\Big)^{N}\nonumber\\
	    &+2^{-\delta N},
	\end{align}
	where $I_{ab,e}$ denotes $I(w_{ab};w_{e,1},w_{e,2})$, $\mathcal{Q}_{A}$ denotes the support of $q_{a}$, and $|\mathcal{Q}_{A}|$ denotes its cardinality.
\end{theorem}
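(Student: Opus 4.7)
The plan is to follow the same high-level template as in the proof of \Tref{semanticdskg}, but replace the semantic-security-based step with a Fano-inequality argument, which is what the paragraph preceding the theorem essentially advertises. Concretely, I would work at a single subcarrier, bound Eve's per-subcarrier error probability in terms of a conditional entropy, upgrade the conditional-entropy bound to one involving only the marginal entropy $H(q_{a})$ and the mutual information $I_{ab,e}$, and then combine over $N$ subcarriers and add the privacy-amplification fallback via a union bound.

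First I would fix a subcarrier and let Eve apply any (possibly randomized) estimator $\widehat{q}_{a}$ of $q_{a}$ based on her observations $(w_{e,1},w_{e,2})$. By Fano's inequality,
\begin{equation*}
H(q_{a}\mid w_{e,1},w_{e,2})\;\leq\;1+P_{e}\,\log_{2}(|\mathcal{Q}_{A}|),
\end{equation*}
where $P_{e}=\Pr(\widehat{q}_{a}\neq q_{a})$ and I have used $H_{2}(P_{e})\leq 1$. Next I would lower bound the left-hand side: since $q_{a}$ is a deterministic (quantization) function of $w_{ab}$, the data processing inequality gives $I(q_{a};w_{e,1},w_{e,2})\leq I(w_{ab};w_{e,1},w_{e,2})=I_{ab,e}$, hence
\begin{equation*}
H(q_{a}\mid w_{e,1},w_{e,2})\;=\;H(q_{a})-I(q_{a};w_{e,1},w_{e,2})\;\geq\;H(q_{a})-I_{ab,e}.
\end{equation*}
Combining the two displays yields
\begin{equation*}
P_{e}\;\geq\;\frac{H(q_{a})-I_{ab,e}-1}{\log_{2}(|\mathcal{Q}_{A}|)},
\end{equation*}
so that Eve's per-subcarrier probability of correctly recovering $q_{a}$ is at most $1-\frac{H(q_{a})-I_{ab,e}-1}{\log_{2}(|\mathcal{Q}_{A}|)}$.

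To pass from a single subcarrier to the whole block, I would invoke the independence of the randomness across the $N$ OFDM subcarriers (the $\textbf{s}_i$, $\textbf{v}_i$, and channel coefficients are all independent across subcarriers in the model), so Eve's probability of recovering the full quantized sequence is bounded by the $N$-th power of the per-subcarrier bound. Finally I would mimic the last step of the proof of \Tref{semanticdskg}: if Eve fails to reconstruct the entire shared randomness at the quantization level, then by the universal-hash property used in the privacy amplification stage and the fact that the key length is half of the quantized bit length, she can still guess the final key with probability at most $2^{-\delta N}$. A union bound over these two disjoint events gives the claimed upper bound.

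The routine parts are the Fano and data-processing steps; the subtlety I would be most careful about is justifying (i) that the appropriate mutual information to plug in is exactly $I_{ab,e}=I(w_{ab};w_{e,1},w_{e,2})$ rather than something involving $\widetilde{w}_{ab}$ or the messages sent on the public reconciliation channel, and (ii) the across-subcarrier factorization, which requires that $(q_{a},w_{e,1},w_{e,2})$ at distinct subcarriers are mutually independent. Both follow from the per-subcarrier model in \eqref{wia}, \eqref{wib}, \eqref{evereceiverskg1}, and \eqref{wie}, together with the worst-case assumption that Eve already knows everything the relay knows, so no additional leakage through the public channel needs to be tracked separately.
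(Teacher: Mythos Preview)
Your proposal is correct and follows essentially the same approach as the paper: a per-subcarrier Fano bound, the data processing inequality to replace $I(q_a;w_{e,1},w_{e,2})$ by $I_{ab,e}$, independence across the $N$ subcarriers, and then the union bound with the $2^{-\delta N}$ hash-based fallback. The paper's proof is organized as the same chain of inequalities and invokes exactly the same justifications you list.
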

    
\begin{proof}
    Let $\mathrm{C}$ denote the event of correct estimation of $q_{a}$ and $\mathrm{E}$ denote the event of erroneous estimation of $q_{a}$ by the eavesdropper. Then, we have the following 
    \begin{align}
    \mathrm{Pr (C)} &=1-\mathrm{Pr (E)}\\
    &\stackrel{\text{(a)}}{\leq} 1- \frac{H(q_{a}|w_{e,1},w_{e,2})-1}{\log_2(|\mathcal{Q}_{A}|)}\\
    &\stackrel{\text{(b)}}{=}1-\frac{H(q_{a})-I(q_{a};w_{e,1},w_{e,2})-1}{\log_2(|\mathcal{Q}_{A}|)}\\
    &\stackrel{\text{(c)}}{\leq}1-\frac{H(q_{a})-I(w_{ab};w_{e,1},w_{e,2})-1}{\log_2(|\mathcal{Q}_{A}|)}\\
    &\stackrel{\text{(d)}}{=}1-\frac{H(q_{a})-I_{ab,e}-1}{\log_2(|\mathcal{Q}_{A}|)},
    \end{align}
    \noindent where:\\
    (a) holds by Fano's inequality \cite{cover2012elements};\\
    (b) is the expansion of conditional entropy;\\
    (c) follows from the data processing inequality because $q_a$ is a deterministic function of $w_{ab}$ and hence, $(w_{e,1},w_{e,2})$, $w_{ab}$, and $q_a$ form a Markov chain;\\
    (d) is a change of the notation of $I(w_{ab};w_{e,1},w_{e,2})$ to $I_{ab,e}$.
    
    Note that the probability of correctly estimating every bit of $\textbf{q}_{a}$, denoted by $\mathrm{Pr (C_N)}$, is equal to the probability of correctly estimating $q_{a}$ over all the $N$ subcarriers, since the computation of mutual information is the same over all subcarriers. Hence, by using the independence of such events across the $N$ subcarriers, we have
    \begin{align}
    \label{fanorskg1}
        \mathrm{Pr (C_N)} \leq \Big(1-\frac{H(q_{a})-I_{ab,e}-1}{\log_2(|\mathcal{Q}_{A}|)}\Big)^N.
    \end{align}
    If Eve cannot recover all the shared randomness bits in a single session, the probability that she correctly guesses the secret key, by the property of hash functions in the privacy amplification part of our protocol, is at most $2^{-\delta N}$. This, together with \eq{fanorskg1}, and using the union bound complete the proof.
\end{proof}
    
Next, we illustrate how \Tref{fanorskg} can be used in a numerical setup to upper bound the probability of a successful eavesdropping attack by Eve. Suppose that Alice and Bob use 64-QAM constellation points to transmit their induced randomness, the received SNR is $23$\,dB at Alice and Bob in \eqref{y3a}, the quantization parameter $\delta$ is $2$, and the number of subcarriers $N$ is $16$. Eve is located close to Carol, but at least $\lambda/2$ away from her. Given these parameters the mutual information $I_{ab,e}=I(w_{ab};w_{e,1},w_{e,2})$ is numerically estimated as $I_{ab,e} \approx 1.39$ bits, and the entropy of the generated key bits is numerically estimated as $H(q_{a}) \approx 3.86$ bits. Then, by \Tref{fanorskg}, the probability of successful eavesdropping attack is upper bounded, approximately, by $2^{-10.57}$.

\section{Numerical Results}\label{numerical_results}
In this section we consider numerical setups with reasonable parameters and evaluate the proposed protocols for the two described scenarios, i.e., when a direct channel exists, and when a relay is used for the key generation, using the metrics described in Section\,\ref{system_model}. Also, in order to provide numerical results in a realistic environment, a certain number of channel coefficients is generated by the NYUSIM Channel Simulator \cite{NYUSIM}. The simulator is used to generate channel coefficients for realistic {5G mmWave} channels from measurement-based models. A description of the three setups is discussed next, followed by numerical results shown for all the considered setups.

\subsection{Setup}

\subsubsection{Direct Secret Key Generation}\label{directsetupnumericalresults}
In this scenario, it is assumed that Alice and Bob communicate over a direct and reciprocal wireless channel. The constellation size for each subcarrier is $M=16$, i.e., the set of $16$-QAM symbols are used as the set from which local randomness is chosen and transmitted by Alice and Bob. Also, $N=16$ OFDM subcarriers are assumed to be available in the channel between Alice and Bob. The quantization is done with $ \delta = 2 $, i.e., the real and imaginary parts of the received symbol in each subcarrier are quantized into one of the four possibilities as discussed in Section\,\ref{quantization}. Finally, the remaining steps including secure sketch, hashing, and consistency checking are performed as discussed in Section\,\ref{pp}.

\subsubsection{NYUSIM-Based Secret Key Generation}\label{nuysimsetupnumericalresults}
In this scenario, it is assumed that Alice and Bob have a direct reciprocal wireless channel where the coefficients are generated by the NYUSIM Channel Simulator \cite{NYUSIM}. They operate in a non-line-of-sight (NLOS) urban micro-cellular environment at $20^{\circ}$C, the operating frequency is $28$ GHz, and the distance between Alice and Bob and Alice and Eve is $10$ meters. The path contains $1$ meter of foliage, and there is an outdoor-to-indoor low loss. Channel coefficients between Alice and Bob and channel coefficients between Alice and Eve are generated by the NYUSIM Channel Simulator over $N=16$ subcarriers. Alice and Bob choose their induced randomness from the set of $16$-QAM symbols, and the quantization is done with $ \delta = 2 $. The remaining steps follow as in the first scenario.

\subsubsection{Relay-Based Secret Key Generation}\label{relaysetupnumericalresults}
In this scenario, it is assumed that Alice and Bob have direct and reciprocal wireless channels with the relay, which can be perfectly estimated. Also, a scenario is considered for eavesdropping, as discussed in Section\,\ref{attackerRSKG}, where Eve uses the relay's observations. Alice and Bob choose their induced randomness from the set of $64$-QAM symbols, and set their power levels in such a way that the average received SNRs at the relay from both Alice and Bob are equal. Then, the amplification vector $\alpha$ is chosen such that the average SNR, which is considered in the results, of Alice and Bob's correlated observations, \eqref{wia} and \eqref{wib}, respectively, is the same. The remaining parameters and steps are similar to the previous scenarios.

\subsection{Results}

\subsubsection{Bit Generation Rate}
For all the setups described above, Alice and Bob exchange their induced randomness over $N=16$ subcarriers, with quantization resolution $\delta=2$ for the real and imaginary parts separately. Note that $16 \times 2 \times 2 = 64$ bits are generated by Alice and Bob during each session of the protocol. Hence, the bit generation rate (BGR) is $64$ bits/packet. The length of the final secret key is $64/2 = 32$ bits. In order to increase Eve's bit error rate, we assume that four blocks of keys, generated during four separate successful sessions, are added together modulo 2 to obtain one final key of length 32 per each four sessions. Such BGR is considered high compared to protocols designed for static channels setups, which have BGR of $\frac{1}{4}$ to $\frac{1}{2}$ bits/packet as in \cite{wang2011fast}, or $8$ bits/packet as in \cite{huang2013fast}, and it is comparable with protocols designed for dynamic environments, such as \cite{cgc} whose BGR is $60-90$ bit/packet.

\subsubsection{Bit Mismatch Rate and Bit Error Rate}
The bit mismatch rate (BMR) and bit error rate (BER) between Alice and Bob, and Alice and Eve are shown in Figure \ref{fig:BMR} and Figure \ref{fig:BER}, respectively, for the three described setups. For the bit mismatch rate, in the direct and NYUSIM-based SKG setups we compare Alice's and Bob's quantized sequences of \eqref{alicekeydskg} and \eqref{bobkeydskg}, respectively, and Alice's and Eve's quantized sequences of \eqref{alicekeydskg} and \eqref{evekeyseqdskg1}, respectively. Also, for the relay-based SKG setup, we compare Alice's and Bob's quantized sequences of \eqref{wia} and \eqref{wib}, respectively, and Alice's and Eve's quantized sequences of \eqref{wia} and \eqref{wie}, respectively. It is worth noting that as the average SNR increases in the NYUSIM-based SKG setup, Eve's BMR decreases but the rate of decrease slows down. It can be observed that an increase of around $3$\,dB of the average SNR is required in the relay-based SKG setup to achieve a BMR similar to the first two setups. In comparison with other protocols for static environments at $20$\,dB, they have BMR of around $1\%$ as in \cite{wang2011fast}, $4\%$ as in \cite{huang2013fast}, $4\%$ and $13\%$ for the direct and relay-based setups as in \cite{guillaume2015secret}. 

As for the BER, we compare Alice's and Bob's final key sequences and Alice's and Eve's final key sequences. It can be observed that the BER at Bob is extremely low due to the requirement of the consistency checking step in the protocol, which only allows keys whose consistency is verified with high probability to be accepted. Note that the main reason for the average BER at Eve being around $50\%$ is the privacy amplification step of the protocol. In addition to that, the cumulative distribution function (CDF) of the BER at Eve's final key at $20$\,dB average SNR for both the direct and NYUSIM-based SKG setups, and $23$\,dB average SNR for the relay-based SKG setup is shown in Figure \ref{fig:BEREVE}. Note that the curves for all the setups are similar because these curves compare the keys which are the addition modulo 2 of four separate outputs of the hash functions at Alice and Eve. The universal hash function generates a uniformly random output resulting in the similarity of the curves. Also, it is observed that the probability of accepting a mismatched key for the aforementioned average SNRs in the direct and relay-based SKG setups is around $0.0015\%$, and for the NYUSIM-based SKG setup is around $0.00152\%$, which are less than $0.00153\%$ as predicted by \Tref{pruhff}. The aforementioned probability is considered to be very low. In comparison, it is far less than the probability of generating mismatched keys of the protocol proposed for direct SKG in static environments in \cite{fang2017manipulatable}, which is at least $3\%$. In addition to that, as discussed in the security evaluation of the protocol, the probability of acquiring the key perfectly by Eve is, at most, $2^{-31}$ and $2^{-10.57}$, for the direct and relay-based SKG, respectively. In comparison, the protocol proposed for direct SKG in static environments in \cite{fang2017manipulatable} has the probability of acquiring the key by Eve in the range $0.09\%-0.47\%$.

\begin{figure}
    \begin{center}
	\includegraphics[width=\columnwidth]{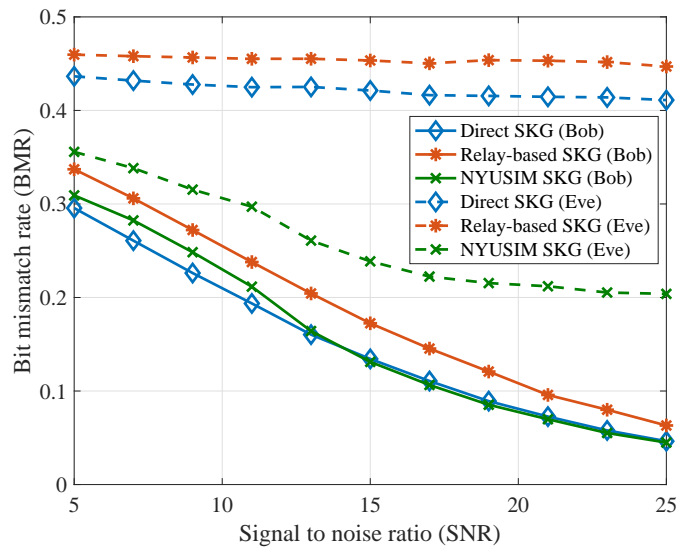}
	\end{center}
	\caption{The bit mismatch rate (BMR) between Alice's sequence and Bob's and Eve's sequences versus the signal to noise ratio.}
	\label{fig:BMR}
\end{figure}

\begin{figure}
    \begin{center}
	\includegraphics[width=\columnwidth]{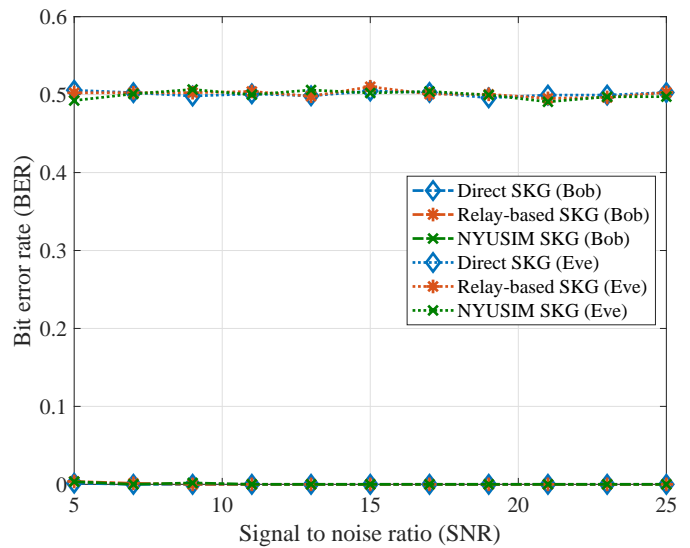}
	\end{center}
	\caption{The bit error rate (BER) between Alice's sequence and Bob's and Eve's sequences versus the signal to noise ratio.}
	\label{fig:BER}
\end{figure}

\begin{figure}
	\begin{center}
		\includegraphics[width=\columnwidth]{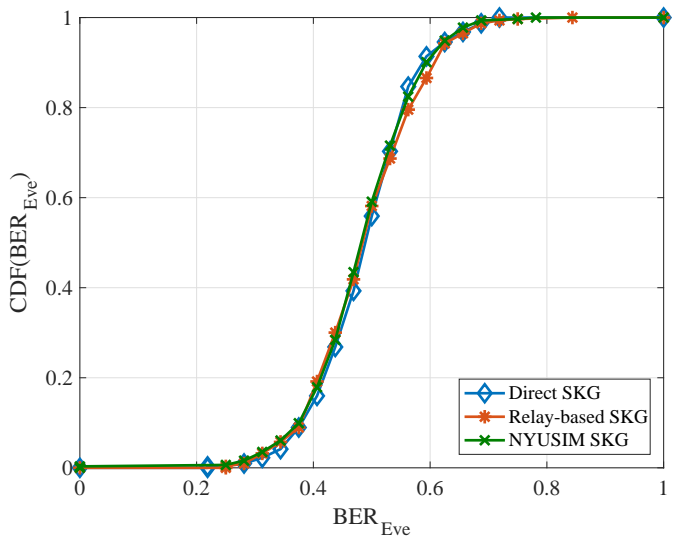}
	\end{center}
	\caption{The cumulative distribution function of the BER at Eve for the direct, relay-based, and NYUSIM-based SKG setups. The compared sequences are the modulo 2 addition of the outputs of four successful key generation sessions. The universal hash function is supposed to generate a uniformly random output, hence the similarity of the curves.}
	\label{fig:BEREVE}
\end{figure}

\subsubsection{Randomness}
The randomness of the generated final key sequence is examined using the NIST statistical test suite \cite{rukhin2001statistical}. The suite consists of 15 tests and generates a probability value, also referred to as \textit{p}-value, for each individual test. For each test, a sequence is considered random with 99\% confidence if the corresponding \textit{p}-value is greater than 0.01. We run the protocol using constant channel coefficients at $20$\,dB average SNRs for the direct and NYUSIM-based SKG setups, and $23$\,dB average SNR for \eqref{wia} and \eqref{wib} in the relay-based SKG setup to generate a sequence of length $ 2^{20} $ bits and feed it to the test suite. Since the sequences pass all the tests as shown in Table \ref{table:nist}, they are considered random with 99\% confidence.
\begin{table}
\begin{center}
	\caption{NIST Statistical Test Results}
	\begin{tabular}{c||c||c||c}
		\hline
		\textbf{Test} & \textbf{Direct} & \textbf{Relay}& \textbf{\!NYUSIM\!}\\ 
		\hline
		Monobit & 0.8712 & 0.9968 & 0.2829  \\ 
		\hline
		Frequency Block & 0.3529  & 0.7458 & 0.3172 \\
		\hline
		Runs & 0.5347 & 0.9781 & 0.4516\\
		\hline
		Longest Run of Ones & 0.7696 & 0.3552 & 0.1692 \\
		\hline
		Binary Matrix Rank & 0.9263 & 0.9974 & 0.3125\\
		\hline
		DFT & 0.6413 & 0.3469 & 0.0121\\
		\hline
		Non-Overlapping Template Matching & 1 & 1 & 1\\
		\hline
		Overlapping Template Matching & 0.2830 & 0.3501 & 0.4043\\
		\hline
		Maurer's Universal Statistical & 0.9991 & 1 & 0.9993\\
		\hline
		Linear Complexity & 0.9909 & 0.5323 &0.0227\\
		\hline
		Serial & 0.2989 & 0.5852 & 0.7236\\
		\hline
		Approximate Entropy & 0.4808  & 0.9160 &0.7529\\
		\hline
		Cumulative Sums & 0.7825 & 0.8392 & 0.1833\\
		\hline
		Random Excursion & 0.0179 & 0.2924 &0.0925 \\
		\hline
		Random Excursion Variant Test & 0.0434 & 0.0154 & 0.0615 \\
		\hline
\end{tabular}\label{table:nist}
\end{center}
\end{table}

\subsubsection{Randomness Efficiency} This is computed according to \eqref{randeff}. For the direct and NYUSIM-based SKG setups, Alice and Bob randomly choose induced randomness bit sequences of length 64, and therefore, $H(S)= H(V) = 64$. Note that the length of the quantized bit sequence is $64$, therefore, $R_Q = 64$. This implies that the randomness efficiency is $ 50\%$. On the other hand, for the relay-based SKG setup, Alice and Bob separately induce $96$ random bits during each round, resulting in $H(S)=H(V)=96$, while the length of the quantized bit sequence is $R_Q=64$. The resulting randomness efficiency of the relay-based SKG setup is $33\%$.  Roughly speaking, the remaining part of the available randomness is used to provide security. The exact trade-off between randomness efficiency and security is an interesting problem.

\subsubsection{Average Number of Sessions Required to Generate Keys} In this part, the average number of sessions Alice and Bob need to generate their final secret key given different values of SNR is compared for all three setups. Note that the length of final secret key is $32$, which is obtained by adding modulo $2$ the outputs of the protocol in four successful sessions. Hence, the average number of sessions required to generate a key approaches $4$ as SNR grows large. The number of required sessions for the relay-based scenario is higher due to a more severe effect of the noise on the shared randomness. This, consequently, affects how often Alice and Bob obtain the same key sequence resulting in a successful session of the protocol. Figure \ref{fig:Sessions} shows the average number of sessions for all considered setups.

\begin{figure}
    \begin{center}
	\includegraphics[width=\columnwidth]{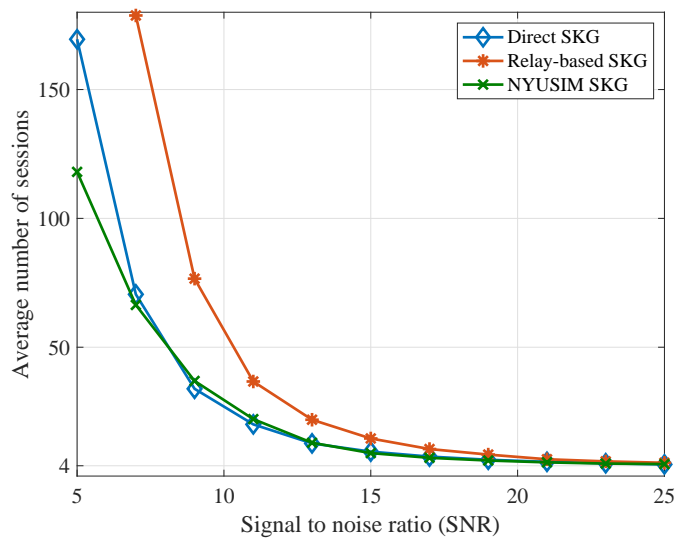}
	\end{center}
	\caption{The average number of sessions until key agreement versus the signal to noise ratio.}
	\label{fig:Sessions}
\end{figure}

\subsubsection{Impact of Non-reciprocity} The perfect channel reciprocity feature is assumed to hold throughout the paper; however, in some practical scenarios, different factors such as mismatched hardware and synchronization errors may cause the channel coefficients experienced at Alice and Bob to not being perfectly reciprocal \cite{cgc,zeng2015physical,primak2014secret}. Such imperfections can be taken into account using the Pearson correlation coefficient, denoted by $\zeta$, between such channel coefficients explained as follows. In general, under perfect channel reciprocity conditions, we have $\zeta=1$, while imperfections reduce the value of $\zeta$. As suggested in \cite{primak2014secret}, a model to describe the relation between the channel coefficients at a subcarrier during session $i$ observed at Alice, i.e., $\widetilde{h}_{i,ab}$, and Bob, i.e., $h_{i,ab}$, when they observe the same SNR is as follows:
\begin{align}
    h_{i,ab} = \zeta  \widetilde{h}_{i,ab}+ \sqrt{1-|\zeta |^2} \frac{\sigma_{h_{i}}}{\sqrt{2}} n_{i},
\end{align}
where $\zeta$ is the correlation coefficient, $\sigma_{h_{i}}^2/2$ is the dimension variance of $h_{i,ab}$ and $\widetilde{h}_{i,ab}$, and $n_{i}$ denotes the circularly-symmetric Gaussian-distributed independent noise component with mean $ 0 $ and unit dimension variance. In order to illustrate the effect of imperfect reciprocity in the direct SKG setup, the bit mismatch rate for different values of the correlation coefficient $\zeta$ is shown in Figure \ref{fig:BMR_nonreciprocity}. It can be observed that as the correlation coefficient between the channel coefficients decreases, the BMR between Alice's and Bob's quantized sequences increases causing the protocol to experience higher number of unsuccessful sessions. For instance, to achieve a BMR around $22\%$, the required SNR is $9$ dB for $\zeta=1$, whereas it is $15$ dB for $\zeta=0.9$. On the other hand, when comparing the average number of sessions required to agree on a key at $15$ dB, it is around $9$ sessions for $\zeta=1$, while it is around $37$ sessions for $\zeta=0.9$. Depending on the severity of the imperfections, the protocol's parameters would require certain adjustments to overcome such degradation. For example, the legitimate parties can decrease the bit generation rate by using a lower quantization resolution $\delta$, or decrease the rate of the error-correcting code used for reconciliation which results in an increase of the amount of information leaked to the eavesdropper.

\begin{figure}
	\begin{center}
		\includegraphics[width=\columnwidth]{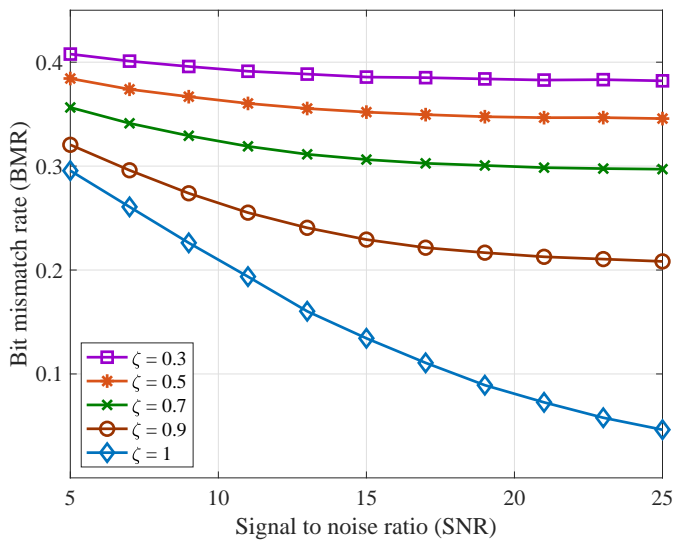}
	\end{center}
	\caption{The bit mismatch rate (BMR) in the direct SKG setup between Alice's and Bob's sequences versus the signal to noise ratio for different values of the correlation coefficient $\zeta$ of the channels experienced at Alice and Bob.}
	\label{fig:BMR_nonreciprocity}
\end{figure}

\section{Conclusion}\label{conclusion}
	
In this paper, we propose a new low-complexity approach to generate secret keys in static environments at high rates using induced randomness. We utilize a low-complexity method where legitimate parties induce locally-generated randomness into the channel such that high-rate common randomness can be generated. More specifically, two main scenarios are considered for the proposed protocols taking into account whether a direct wireless channel is available between legitimate parties or no such channel is available and the transmissions occur through an intermediate relay. We evaluate the reliability and security of the proposed protocols using information theoretic measures. The protocols are also evaluated using metrics including BGR, BMR, BER, and the newly introduced randomness efficiency. Furthermore, numerical results are also shown for a realistic 5G mmWave setup, where channel coefficients are generated by the measurement-based NYUSIM Channel Simulator \cite{NYUSIM}. To ensure that the keys generated by this protocol are random, the generated keys are tested using the NIST statistical test suite. The low-complexity nature of the various steps of the proposed protocols make them appealing for applications concerning resource-constrained devices, e.g., IoT networks, where low complexity methods for generating distributed secret keys are highly desirable. 
	
There are several possible directions for future work. It is interesting to extend the setups considered in this paper to multi-user scenarios where multiple users wish to generate shared secret keys with the help of multiple intermediate relays. From an information-theoretic perspective, this relates to the problem of distributed secret sharing in multi-user scenarios \cite{soleymani2018distributed}. Also, investigating scenarios where the passive eavesdropper has further capabilities than what is considered in this paper, e.g., being able to deploy multiple antennas in the surrounding environment, is another interesting direction. Moreover, studying the resilience of the proposed protocols in the presence of an active eavesdropper who can act as a jammer with the aim of partially crippling the key generation process by sending intentional interference during the randomness exchange is another interesting direction for future work.

\bibliographystyle{IEEEtran}
\bibliography{IEEEabrv}

\begin{thebibliography}{10}
\providecommand{\url}[1]{#1}
\csname url@samestyle\endcsname
\providecommand{\newblock}{\relax}
\providecommand{\bibinfo}[2]{#2}
\providecommand{\BIBentrySTDinterwordspacing}{\spaceskip=0pt\relax}
\providecommand{\BIBentryALTinterwordstretchfactor}{4}
\providecommand{\BIBentryALTinterwordspacing}{\spaceskip=\fontdimen2\font plus
\BIBentryALTinterwordstretchfactor\fontdimen3\font minus
  \fontdimen4\font\relax}
\providecommand{\BIBforeignlanguage}[2]{{%
\expandafter\ifx\csname l@#1\endcsname\relax
\typeout{** WARNING: IEEEtran.bst: No hyphenation pattern has been}%
\typeout{** loaded for the language `#1'. Using the pattern for}%
\typeout{** the default language instead.}%
\else
\language=\csname l@#1\endcsname
\fi
#2}}
\providecommand{\BIBdecl}{\relax}
\BIBdecl

\bibitem{aldaghri2018fast}
N.~Aldaghri and H.~Mahdavifar, ``Fast secret key generation in static
  environments using induced randomness,'' in \emph{2018 IEEE Global
  Communications Conference (GLOBECOM)}.\hskip 1em plus 0.5em minus 0.4em\relax
  IEEE, 2018, pp. 1--6.

\bibitem{AES}
N.~F. Pub, ``197: Advanced encryption standard ({AES}),'' \emph{Federal
  information processing standards publication}, vol. 197, no. 441, p. 0311,
  2001.

\bibitem{RSA}
R.~L. Rivest, A.~Shamir, and L.~Adleman, ``A method for obtaining digital
  signatures and public-key cryptosystems,'' \emph{Communications of the ACM},
  vol.~21, no.~2, pp. 120--126, 1978.

\bibitem{granjal2015security}
J.~Granjal, E.~Monteiro, and J.~S. Silva, ``Security for the internet of
  things: a survey of existing protocols and open research issues,'' \emph{IEEE
  Communications Surveys \& Tutorials}, vol.~17, no.~3, pp. 1294--1312, 2015.

\bibitem{bloch2011physical}
M.~Bloch and J.~Barros, \emph{Physical-layer security: from information theory
  to security engineering}.\hskip 1em plus 0.5em minus 0.4em\relax Cambridge
  University Press, 2011.

\bibitem{ahlswede1993common}
R.~Ahlswede and I.~Csisz{\'a}r, ``Common randomness in information theory and
  cryptography. i. secret sharing,'' \emph{IEEE Transactions on Information
  Theory}, vol.~39, no.~4, pp. 1121--1132, 1993.

\bibitem{maurer1993secret}
U.~M. Maurer, ``Secret key agreement by public discussion from common
  information,'' \emph{IEEE transactions on information theory}, vol.~39,
  no.~3, pp. 733--742, 1993.

\bibitem{mathur2008radio}
S.~Mathur, W.~Trappe, N.~Mandayam, C.~Ye, and A.~Reznik, ``Radio-telepathy:
  extracting a secret key from an unauthenticated wireless channel,'' in
  \emph{Proceedings of the 14th ACM international conference on Mobile
  computing and networking}.\hskip 1em plus 0.5em minus 0.4em\relax ACM, 2008,
  pp. 128--139.

\bibitem{zhang2016key}
J.~Zhang, T.~Q. Duong, A.~Marshall, and R.~Woods, ``Key generation from
  wireless channels: A review,'' \emph{IEEE Access}, vol.~4, pp. 614--626,
  2016.

\bibitem{wilson2007channel}
R.~Wilson, D.~Tse, and R.~A. Scholtz, ``Channel identification: Secret sharing
  using reciprocity in ultrawideband channels,'' \emph{IEEE Transactions on
  Information Forensics and Security}, vol.~2, no.~3, pp. 364--375, 2007.

\bibitem{ren2011secret}
K.~Ren, H.~Su, and Q.~Wang, ``Secret key generation exploiting channel
  characteristics in wireless communications,'' \emph{IEEE Wireless
  Communications}, vol.~18, no.~4, 2011.

\bibitem{molisch2012wireless}
A.~F. Molisch, \emph{Wireless communications}.\hskip 1em plus 0.5em minus
  0.4em\relax John Wiley \& Sons, 2012, vol.~34.

\bibitem{asbg}
S.~N. Premnath, S.~Jana, J.~Croft, P.~L. Gowda, M.~Clark, S.~K. Kasera,
  N.~Patwari, and S.~V. Krishnamurthy, ``Secret key extraction from wireless
  signal strength in real environments,'' \emph{IEEE Transactions on mobile
  Computing}, vol.~12, no.~5, pp. 917--930, 2013.

\bibitem{cgc}
H.~Liu, Y.~Wang, J.~Yang, and Y.~Chen, ``Fast and practical secret key
  extraction by exploiting channel response,'' in \emph{INFOCOM, 2013
  Proceedings IEEE}.\hskip 1em plus 0.5em minus 0.4em\relax IEEE, 2013, pp.
  3048--3056.

\bibitem{keep}
W.~Xi, X.-Y. Li, C.~Qian, J.~Han, S.~Tang, J.~Zhao, and K.~Zhao, ``{KEEP}: Fast
  secret key extraction protocol for {D2D} communication,'' in \emph{Quality of
  Service (IWQoS), 2014 IEEE 22nd International Symposium of}.\hskip 1em plus
  0.5em minus 0.4em\relax IEEE, 2014, pp. 350--359.

\bibitem{liu2012exploiting}
Y.~Liu, S.~C. Draper, and A.~M. Sayeed, ``Exploiting channel diversity in
  secret key generation from multipath fading randomness,'' \emph{IEEE
  Transactions on information forensics and security}, vol.~7, no.~5, pp.
  1484--1497, 2012.

\bibitem{zhang2014secure}
J.~Zhang, A.~Marshall, R.~Woods, and T.~Q. Duong, ``Secure key generation from
  {OFDM} subcarriers' channel responses,'' in \emph{2014 IEEE Globecom
  Workshops (GC Wkshps)}.\hskip 1em plus 0.5em minus 0.4em\relax IEEE, 2014,
  pp. 1302--1307.

\bibitem{zeng2015physical}
K.~Zeng, ``Physical layer key generation in wireless networks: challenges and
  opportunities,'' \emph{IEEE Communications Magazine}, vol.~53, no.~6, pp.
  33--39, 2015.

\bibitem{zorgui2016ergodic}
M.~Zorgui, Z.~Rezki, B.~Alomair, E.~A. Jorswieck, and M.-S. Alouini, ``On the
  ergodic secret-key agreement over spatially correlated multiple-antenna
  channels with public discussion,'' \emph{IEEE Transactions on Signal
  Processing}, vol.~64, no.~2, pp. 495--510, 2016.

\bibitem{jorswieck2013secret}
E.~A. Jorswieck, A.~Wolf, and S.~Engelmann, ``Secret key generation from
  reciprocal spatially correlated {MIMO} channels,'' in \emph{Globecom
  Workshops (GC Wkshps), 2013 IEEE}.\hskip 1em plus 0.5em minus 0.4em\relax
  IEEE, 2013, pp. 1245--1250.

\bibitem{jiao2018secret}
L.~Jiao, N.~Wang, and K.~Zeng, ``{Secret Beam}: Robust secret key agreement for
  {mmWave} massive {MIMO} {5G} communication,'' in \emph{2018 IEEE Global
  Communications Conference (GLOBECOM)}.\hskip 1em plus 0.5em minus 0.4em\relax
  IEEE, 2018, pp. 1--6.

\bibitem{madiseh2012applying}
M.~G. Madiseh, S.~W. Neville, and M.~L. McGuire, ``Applying beamforming to
  address temporal correlation in wireless channel characterization-based
  secret key generation,'' \emph{IEEE Transactions on Information Forensics and
  Security}, vol.~7, no.~4, pp. 1278--1287, 2012.

\bibitem{gollakota2011physical}
S.~Gollakota and D.~Katabi, ``Physical layer wireless security made fast and
  channel independent,'' in \emph{INFOCOM, 2011 Proceedings IEEE}.\hskip 1em
  plus 0.5em minus 0.4em\relax IEEE, 2011, pp. 1125--1133.

\bibitem{goel2008guaranteeing}
S.~Goel and R.~Negi, ``Guaranteeing secrecy using artificial noise,''
  \emph{IEEE transactions on wireless communications}, vol.~7, no.~6, 2008.

\bibitem{wang2011fast}
Q.~Wang, H.~Su, K.~Ren, and K.~Kim, ``Fast and scalable secret key generation
  exploiting channel phase randomness in wireless networks,'' in \emph{2011
  Proceedings IEEE INFOCOM}.\hskip 1em plus 0.5em minus 0.4em\relax IEEE, 2011,
  pp. 1422--1430.

\bibitem{huang2013fast}
P.~Huang and X.~Wang, ``Fast secret key generation in static wireless networks:
  A virtual channel approach,'' in \emph{2013 Proceedings IEEE INFOCOM}.\hskip
  1em plus 0.5em minus 0.4em\relax IEEE, 2013, pp. 2292--2300.

\bibitem{li2017security}
G.~Li, A.~Hu, J.~Zhang, and B.~Xiao, ``Security analysis of a novel artificial
  randomness approach for fast key generation,'' in \emph{GLOBECOM 2017-2017
  IEEE Global Communications Conference}.\hskip 1em plus 0.5em minus
  0.4em\relax IEEE, 2017, pp. 1--6.

\bibitem{fang2017manipulatable}
S.~Fang, I.~Markwood, and Y.~Liu, ``Manipulatable wireless key establishment,''
  in \emph{2017 IEEE Conference on Communications and Network Security
  (CNS)}.\hskip 1em plus 0.5em minus 0.4em\relax IEEE, 2017, pp. 1--9.

\bibitem{sunar2007provably}
B.~Sunar, W.~J. Martin, and D.~R. Stinson, ``A provably secure true random
  number generator with built-in tolerance to active attacks,'' \emph{IEEE
  Transactions on computers}, vol.~56, no.~1, 2007.

\bibitem{coupling1}
N.~Ebrahimi, H.~Mahdavifar, and E.~Afshari, ``A novel approach to secure
  communication in physical layer via coupled dynamical systems,''
  \emph{Proceedings of IEEE Global Communications Conference (GLOBECOM)}, 2018.

\bibitem{coupling2}
H.~Mahdavifar and N.~Ebrahimi, ``Secret key generation via pulse-coupled
  synchronization,'' \emph{Proceedings of IEEE International Symposium on
  Information Theory (ISIT)}, pp. 3037--3041, 2019.

\bibitem{IMS}
N.~Ebrahimi, B.~Yektakhah, K.~Sarabandi, H.-S. Kim, D.~D. Wentzloff, and
  D.~Blaauw, ``A novel physical layer security technique using master-slave
  full duplex communication,'' \emph{Proceedings of IEEE/MTT-S International
  Microwave Symposium (IMS)}, pp. 1096--1099, 2019.

\bibitem{lai2012cooperative}
L.~Lai, Y.~Liang, and W.~Du, ``Cooperative key generation in wireless
  networks,'' \emph{IEEE Journal on Selected Areas in Communications}, vol.~30,
  no.~8, pp. 1578--1588, 2012.

\bibitem{wang2012cooperative}
Q.~Wang, K.~Xu, and K.~Ren, ``Cooperative secret key generation from phase
  estimation in narrowband fading channels,'' \emph{IEEE Journal on selected
  areas in communications}, vol.~30, no.~9, pp. 1666--1674, 2012.

\bibitem{dong2010improving}
L.~Dong, Z.~Han, A.~P. Petropulu, and H.~V. Poor, ``Improving wireless physical
  layer security via cooperating relays,'' \emph{IEEE transactions on signal
  processing}, vol.~58, no.~3, pp. 1875--1888, 2010.

\bibitem{shimizu2011physical}
T.~Shimizu, H.~Iwai, and H.~Sasaoka, ``Physical-layer secret key agreement in
  two-way wireless relaying systems,'' \emph{IEEE Transactions on Information
  Forensics and Security}, vol.~6, no.~3, pp. 650--660, 2011.

\bibitem{zhou2014secret}
H.~Zhou, L.~M. Huie, and L.~Lai, ``Secret key generation in the two-way relay
  channel with active attackers.'' \emph{IEEE Trans. Information Forensics and
  Security}, vol.~9, no.~3, pp. 476--488, 2014.

\bibitem{zhang2012physical}
R.~Zhang, L.~Song, Z.~Han, and B.~Jiao, ``Physical layer security for two-way
  untrusted relaying with friendly jammers,'' \emph{IEEE Transactions on
  Vehicular Technology}, vol.~61, no.~8, pp. 3693--3704, 2012.

\bibitem{guillaume2015secret}
R.~Guillaume, S.~Ludwig, A.~M{\"u}ller, and A.~Czylwik, ``Secret key generation
  from static channels with untrusted relays,'' in \emph{Wireless and Mobile
  Computing, Networking and Communications (WiMob), 2015 IEEE 11th
  International Conference on}.\hskip 1em plus 0.5em minus 0.4em\relax IEEE,
  2015, pp. 635--642.

\bibitem{thai2016untrusted}
C.~D.~T. Thai, J.~Lee, and T.~Q. Quek, ``Physical-layer secret key generation
  with colluding untrusted relays,'' \emph{IEEE Transactions on Wireless
  Communications}, vol.~15, no.~2, pp. 1517--1530, 2016.

\bibitem{juels1999fuzzy}
A.~Juels and M.~Wattenberg, ``A fuzzy commitment scheme,'' in \emph{Proceedings
  of the 6th ACM conference on Computer and communications security}.\hskip 1em
  plus 0.5em minus 0.4em\relax ACM, 1999, pp. 28--36.

\bibitem{hash}
T.~H. Cormen, \emph{Introduction to algorithms}.\hskip 1em plus 0.5em minus
  0.4em\relax MIT press, 2009.

\bibitem{rukhin2001statistical}
A.~Rukhin, J.~Soto, J.~Nechvatal, M.~Smid, and E.~Barker, ``A statistical test
  suite for random and pseudorandom number generators for cryptographic
  applications,'' Booz-Allen and Hamilton Inc Mclean Va, Tech. Rep., 2001.

\bibitem{NYUSIM}
S.~Sun, G.~R. MacCartney, and T.~S. Rappaport, ``A novel millimeter-wave
  channel simulator and applications for {5G} wireless communications,'' in
  \emph{2017 IEEE International Conference on Communications (ICC)}.\hskip 1em
  plus 0.5em minus 0.4em\relax IEEE, 2017, pp. 1--7.

\bibitem{hasna2003end}
M.~O. Hasna and M.-S. Alouini, ``End-to-end performance of transmission systems
  with relays over rayleigh-fading channels,'' \emph{IEEE transactions on
  Wireless Communications}, vol.~2, no.~6, pp. 1126--1131, 2003.

\bibitem{viterbi1967error}
A.~Viterbi, ``Error bounds for convolutional codes and an asymptotically
  optimum decoding algorithm,'' \emph{IEEE Transactions on Information Theory},
  vol.~13, no.~2, pp. 260--269, 1967.

\bibitem{dodis2004fuzzy}
Y.~Dodis, L.~Reyzin, and A.~Smith, ``Fuzzy extractors: How to generate strong
  keys from biometrics and other noisy data,'' in \emph{International
  conference on the theory and applications of cryptographic techniques}.\hskip
  1em plus 0.5em minus 0.4em\relax Springer, 2004, pp. 523--540.

\bibitem{goldwasser}
S.~Goldwasser and S.~Micali, ``Probabilistic encryption,'' \emph{Journal of
  computer and system sciences}, vol.~28, no.~2, pp. 270--299, 1984.

\bibitem{semantic}
M.~Bellare, S.~Tessaro, and A.~Vardy, ``Semantic security for the wiretap
  channel,'' in \emph{Advances in Cryptology--CRYPTO 2012}.\hskip 1em plus
  0.5em minus 0.4em\relax Springer, 2012, pp. 294--311.

\bibitem{cover2012elements}
T.~M. Cover and J.~A. Thomas, \emph{Elements of information theory}.\hskip 1em
  plus 0.5em minus 0.4em\relax John Wiley \& Sons, 2012.

\bibitem{primak2014secret}
S.~Primak, K.~Liu, and X.~Wang, ``Secret key generation using physical channels
  with imperfect {CSI},'' in \emph{2014 IEEE 80th Vehicular Technology
  Conference (VTC2014-Fall)}.\hskip 1em plus 0.5em minus 0.4em\relax IEEE,
  2014, pp. 1--5.

\bibitem{soleymani2018distributed}
M.~Soleymani and H.~Mahdavifar, ``Distributed multi-user secret sharing,'' in
  \emph{2018 IEEE International Symposium on Information Theory (ISIT)}.\hskip
  1em plus 0.5em minus 0.4em\relax IEEE, 2018, pp. 1141--1145.

\end{thebibliography}

\end{document}